\documentclass[11pt]{article}

\usepackage[utf8]{inputenc}
\usepackage[T1]{fontenc}
\usepackage{amsmath,amssymb,amsfonts,amsthm}
\usepackage{booktabs}
\usepackage{authblk}
\usepackage[margin=1in]{geometry}
\usepackage{enumitem}
\usepackage{array}
\usepackage[hidelinks]{hyperref}
\hypersetup{pdftitle={Bidirectionality is Necessary in Cellular Machines for Exact-Arrival Multi-Agent Path Finding},
  pdfauthor={Masashi Aono},
  pdfkeywords={multi-agent path finding; exact arrival; cellular machine; space-time DAG;
  path-support; direction lower bound; one-way cellular automata; global aggregate flags;
  transcript counting; direct sum; machine-checked proof; Lean}}

\newtheorem{theorem}{Theorem}
\newtheorem{proposition}{Proposition}
\newtheorem{corollary}{Corollary}
\newtheorem{lemma}{Lemma}
\newtheorem{definition}{Definition}
\newtheorem{conjecture}{Conjecture}
\newtheorem{remark}{Remark}

\newcommand{\pred}{\mathrm{pred}}
\newcommand{\suc}{\mathrm{succ}}

\newcommand{\PRED}{\textnormal{\textsc{Pred}}}
\newcommand{\SUCC}{\textnormal{\textsc{Succ}}}
\newcommand{\Cav}{\mathcal{C}}
\newcommand{\CEA}{\mathrm{CellularEA}}
\newcommand{\Cone}{\mathrm{Cone}}
\newcommand{\Lab}{\mathrm{ConeLabels}}

\title{\textbf{Bidirectionality is Necessary in Cellular Machines\\
for Exact-Arrival Multi-Agent Path Finding}}

\author[1,2]{Masashi Aono\thanks{Corresponding author: \texttt{aono@amoebaenergy.com}}}
\affil[1]{Amoeba Energy Co., Ltd., Tokyo, Japan}
\affil[2]{Keio University, Fujisawa, Kanagawa, Japan}

\date{Preprint --- v0.3 / 2026-07-21}

\begin{document}
\maketitle

\begin{abstract}
On ranked directed acyclic graphs coupled by vertex conflicts we study \emph{exact-arrival} multi-agent path
finding---departure and arrival pinned in place and time---posed as a distributed output task: every cell must
emit its own bit of a valid joint path-support mask. The machines are \emph{cellular}: one automaton per cell,
state factored over cells, communication only along a fixed local fabric, synchronous rounds, a
query-independent output round, no central decoder, and a per-topology transition code. ``Necessary'' here
means: on an explicit hard topology, over all its feasible queries, no universally correct \emph{local-only}
architecture can renounce either temporal orientation; with global flags, impossibility becomes a
bandwidth--round trade-off. The class is not vacuous---a flooding machine solves every instance in as many
rounds as the diameter of the local fabric---and the bound survives two concessions: we supply every agent's
exact individual cavity, the endpoint-conditioned reachability cache, which alone solves the single-agent
problem in zero rounds and makes the single-agent direction theorem vacuous; and we allow $k$ one-bit global
aggregate flags. On a twelve-cell, two-agent, depth-three topology no forward-only machine with $k=0$ is
correct on two feasible queries, for any number of rounds; the time-reversed topology does the same to
backward-only machines. Placing $p$ copies side by side gives, for forward-only machines, $J_R+kR\ge p$ at
\emph{constant depth}, where $J_R$ measures how much of the query reaches the targets' cones and is $0$ here. A
backward-only machine solves that family in constant rounds and a flag-streaming one in
$\lceil p/k\rceil+O(1)$, so the bound is tight in order; on the composite family neither orientation can be
renounced. The finite lemmas of the gadget and the pair-invariance contradictions are checked in Lean~4; the
semantic bridge, the cone lemma and the counting are paper proofs. We report no benchmarks.
\end{abstract}

\medskip
\begin{sloppypar}
\noindent\textbf{Keywords:} multi-agent path finding; exact arrival; cellular machine; space-time DAG;
path-support; direction lower bound; one-way cellular automata; global aggregate flags; transcript counting;
direct sum; machine-checked proof.
\end{sloppypar}

\section{Introduction}
\label{sec:intro}

\subsection{The obstruction}
\label{sec:obstruction}

Fix a routing problem in which every agent must leave at a given time from a given place and \emph{arrive at a
given time at a given place}. Call this the \emph{exact-arrival} (EA) discipline. It is the regime of extended
multi-agent path finding with pinned departure and arrival windows \cite{stern2019,mapftw,precedence}.

Consider a cell early in time that must decide whether it belongs to its agent's trajectory. A branch that is
perfectly legal for that agent \emph{in isolation} may be the wrong branch, because several time steps later it
collides with another agent whose own set of legal options is fixed by \emph{that} agent's endpoints. The early
cell's correct output is then a function of information that lies both in its own strict future and in a
different agent. We call this the \emph{cross-agent future-dependence obstruction}, and the paper is an attempt
to say precisely what it costs a cellular machine to face it.

We study the obstruction inside the EA discipline, where it is operationally natural: an agent that commits
early to a branch often cannot repair the commitment later, having no slack. We do \emph{not} claim that
exactness is the cause. The proofs below never use the fact that early arrival is illegal, and we show no
separation between EA and the free-arrival or deadline variants (Section~\ref{sec:nonclaims},
item~\ref{nc:ea}).

\subsection{Why the machine class is cellular}
\label{sec:whycellular}

The class we study is narrow. One finite automaton per cell; the global state is a product over cells; messages
travel only along the local fabric; updates are synchronous rounds. We call such a machine a \emph{cellular
machine}.

The class is \emph{independently motivated by hardware locality, and the theorem is conditional on that
independently motivated choice}. The motivation has nothing to do with the present result: domain-specific
architectures are a standard response to the slowdown of general-purpose scaling \cite{hennessy}; wiring cost
and delay grow with distance, so one wires neighbours; a design that is one tile repeated is what can be laid
out, clocked and yield-tested at scale; power is dominated by long interconnect rather than by local logic; and
the problem itself is local, since an agent's legality constraints involve its own neighbourhood in space and
time. Locally coupled arrays---optionally augmented with one or a few global broadcast or reduction lines---are
a classical architecture pattern for exactly these reasons \cite{stout1983,bokhari1984,chen1990}, and the
amoeba-inspired combinatorial-optimization machines that motivate this work sit in the same pattern
\cite{aono2020jjap,okuyama2023,aono2025jrsj,mohseni2022}.

The logic is therefore additive, not circular: locality is justified first, on independent grounds; the theorem
then adds one further requirement \emph{inside} that class, namely that the local fabric carry causally
effective dependence in both temporal orientations. A designer who has already decided to build a cellular
fabric still has a free choice about whether its causal dependence runs only from earlier cells to later ones,
and that remaining choice is what we show is not free. It should be said plainly that the results depend
essentially on locality and on factored state; they are not statements about unrestricted solvers
(Proposition~\ref{prop:jointlift}).

\subsection{What is at stake is one round of wall-clock time}
\label{sec:stakes}

We make no claim about complexity classes. Nothing here says that bidirectionality evades NP-hardness or
exponential time. The quantity at stake is the \emph{wall-clock duration of one update of the whole instance}.
In a cell-parallel fabric---one fixed circuit per cell, bounded degree, bounded per-cell state and messages,
one hop per unit of physical time---the duration of a round does not grow with the instance, whereas
serializing the same information through a narrow global channel makes the number of rounds grow. The
separation we prove is between $O(1)$ and $\Omega(N/k)$ rounds on a specific family, under assumptions listed
in full in Section~\ref{sec:hwassumptions}.

\subsection{What ``necessary'' means in the title}
\label{sec:necessary}

The title names a necessity, and the necessity we prove is triply limited. We state the limits here rather
than leaving them to a later section.

\begin{enumerate}[leftmargin=1.8em,itemsep=2pt,label=(N\arabic*)]
\item \textbf{The task is limited.} We ask for per-cell output of \emph{some} valid joint path-support mask
under a \emph{feasibility promise}. Detecting infeasible queries, returning a certificate of infeasibility, or
being complete for $\mathrm{FAIL}$ is neither required nor studied
(Section~\ref{sec:nonclaims}, item~\ref{nc:feasible}).
\item \textbf{The machine is limited.} Deterministic, synchronous, one hop per round, factored state, a
transition code that may be \emph{nonuniform}---compiled per topology---a common output round independent of
the query, and no central decoder.
\item \textbf{The necessity is limited.} It is \emph{architecture-level} and \emph{query-family-level}: there
is an explicit hard topology such that, over all of its feasible queries, no universally correct
\emph{local-only} architecture is forward-only and none is backward-only. We do \emph{not} show that within a
single query one causal path must traverse both orientations. With global flags the impossibility is replaced
by a bandwidth--round trade-off, and with unbounded rounds and one flag a forward-only machine exists
(Section~\ref{sec:nonclaims}, item~\ref{nc:bus}).
\end{enumerate}

\subsection{Contributions}
\label{sec:contrib}

\begin{enumerate}[leftmargin=1.6em,itemsep=3pt]
\item A single machine-class definition, $\CEA(k)$ (Section~\ref{sec:class}), closing every assumption in one
place, with round semantics fixed down to send and receive phases and directionality defined by \emph{actual
dependence}.
\item \textbf{Proposition~\ref{prop:nonvacuous}}: the class is not vacuous. A flooding machine solves every
instance and every feasible query within $D$ rounds, $D$ the undirected diameter of the local fabric, and on
the composite family of Corollary~\ref{cor:composite} it is necessarily bidirectional. The lower bounds therefore separate directions inside
a class that can actually solve the problem.
\item The observation that the individual cavity is the maximum individual path-support
(Lemma~\ref{lem:cavity-max}), so handing it over trivializes the single-agent problem and makes the
single-agent direction theorem vacuous (Corollary~\ref{cor:t1-vacuous}). Any necessity claim surviving endpoint
caching must be genuinely multi-agent.
\item \textbf{Theorem~\ref{thm:A}}: a twelve-cell, two-agent, depth-three EA topology and two feasible queries
on which \emph{no} forward-only machine in $\CEA(0)$ is correct, with unbounded rounds allowed and every
agent's exact cavity supplied---plus an explicit time-reversed topology for backward-only machines and a
composite family excluding both at once (Corollary~\ref{cor:composite}).
\item \textbf{Theorem~\ref{thm:Bprime}}: for forward-only machines, the direct-sum trade-off $J_R+kR\ge p$ at
\emph{constant depth}, with $J_R$ defined as a property of the topology and query family rather than as a
resource of the machine
(Definition~\ref{def:JR}). Section~\ref{sec:nosubdiv} explains why the subdivision a distance-based argument
needs is unnecessary here, and why substituting our family into a distance-based bound yields nothing.
\item Matching upper bounds: $\lceil p/k\rceil+O(1)$ rounds by streaming through the flags
(Proposition~\ref{prop:U1}), and a constant number of rounds with no flags at all
(Proposition~\ref{prop:U2}). The latter machine is \emph{backward-only}, not bidirectional
(Remark~\ref{rem:U2}); the bidirectional constant-round upper bound is stated on the composite family
(Corollary~\ref{cor:separation}), and keeping those two apart is what makes the title's claim the right one.
\item A second obstruction (Section~\ref{sec:baseline}), carried out in the strict arrival discipline of the
theorems and then repeated, as an independent study, in the more permissive discipline that lets a vehicle
arrive early and wait. Its evidence comes in four grades, kept apart throughout: the basic families are
\emph{exhaustively enumerated}; the \emph{feasibility} of a parameterized family is shown by an \emph{explicit
verified witness}; the \emph{infeasibility} of one is a \emph{paper argument} spot-checked for small
parameters; and the policy results are an \emph{exhaustive simulation over a finite grid} of departure rules,
push rules and tie orders. On that basis, for $r=1,\dots,8$ the rule residual-$r$ is defeated under every push
rule and tie order in the grid; the uniform statement for all $r$ is a conjecture
(Conjecture~\ref{conj:coverage}); and earliest departure is defeated only for some settings of the remaining
parameters. The instance generators, the policy semantics, the enumerator and a feasibility matrix over both
semantics are shipped as an ancillary program with frozen expected output.
\item An honest verification ledger (Section~\ref{sec:status}) separating mathematical status, mechanization
status and priority status.
\end{enumerate}

\subsection{Relation to the companion paper}
\label{sec:companion}

The equilibrium half of this programme is a companion paper \cite{paperS}, which proves that for a concrete
four-rule family on space-time DAGs, theoretical stability holds exactly when the occupied set is a
path-support subgraph between the pinned endpoints, and that the inclusion-minimal sufficient rule bases are
exactly three, each pairing a predecessor obligation \PRED\ with a successor obligation \SUCC. Two of its results
start us off. Its limits section shows that under vertex exclusion safe supports are not closed under
union---so there is no canonical maximal joint object---and that independent any-choice extraction does not
tensorize, the joint solution relation being a proper subset of the product of its per-agent marginals. Those
two failures are the semantic reason the obstruction of Section~\ref{sec:obstruction} exists: joint extendability
does not factor over agents, though individual reachability does. The present paper prices the failure. We
also use the companion's definitions of path-support and of \PRED/\SUCC.

One caution about the relationship. \emph{The equilibrium obligations \PRED\ and \SUCC\ are not temporal
communication directions, and the companion theorem does not imply the present lower bound; the two results are
counterparts only at the level of two-sided endpoint dependence.} Indeed the companion family contains minimal
bases whose rules all act in one temporal direction while still imposing both obligations. The companion's
counterexamples motivate; the theorem below is proved independently.

\section{Preliminaries}
\label{sec:prelim}

\subsection{Instances}
\label{sec:instances}

\begin{definition}[Ranked DAG, agents, slots]
\label{def:instance}
A \emph{ranked DAG} is a finite directed graph $D=(Q,E)$ with a time function $\tau:Q\to\mathbb{N}$ such that
$(u,v)\in E\Rightarrow\tau(v)=\tau(u)+1$. Read $\tau(q)$ as the scheduled-time layer of cell $q$; it is a
coordinate of the \emph{instance}, unrelated to the machine's own running time, counted in synchronous rounds.
Write $\pred(q)$, $\suc(q)$ for predecessors and successors and $r\leadsto q$ for a directed path (possibly of
length $0$).

An \emph{instance} consists of $a$ agents, agent $i$ carrying a ranked DAG $D_i=(Q_i,E_i,\tau)$ with the $Q_i$
pairwise disjoint, together with a \emph{slot map} $\sigma:Q\to\mathrm{Slots}$ on $Q=\bigsqcup_iQ_i$ that
respects time ($\sigma(q)=\sigma(q')\Rightarrow\tau(q)=\tau(q')$). Two cells of \emph{distinct} agents sharing
a slot are \emph{conflict-adjacent}; no conflict adjacency is placed inside one agent. The \emph{local fabric}
is the undirected graph on $Q$ whose edges are the DAG edges and the conflict adjacencies. Write $N=|Q|$.

Throughout, a \emph{cell} is a vertex of some $Q_i$, that is, a (location, time) pair of one agent's
space-time graph, and a \emph{location} is a vertex of the underlying spatial graph. An instance built on
$\ell$ locations with horizon $H$ therefore has at most $a\cdot\ell\cdot(H+1)$ cells.
\end{definition}

A \emph{query} $\theta$ pins, for each agent $i$, a start $s_i(\theta)$ and a goal $g_i(\theta)$ with
$s_i\leadsto g_i$: both endpoints fixed in place \emph{and} in time.

\begin{definition}[Path-support, thin trajectory, joint validity]
\label{def:validity}
For pins $s,g$ in one agent, a set $S\ni s,g$ is a \emph{path-support subgraph} if every $q\in S$ lies on some
directed $s\to g$ path all of whose cells are in $S$. Equivalently, on a finite DAG, $S$ satisfies
\[
\PRED:\ \forall q\in S\setminus\{s\}:\ \pred(q)\cap S\neq\varnothing,
\qquad
\SUCC:\ \forall q\in S\setminus\{g\}:\ \suc(q)\cap S\neq\varnothing
\]
\cite{paperS}. A \emph{thin trajectory} is a single directed $s\to g$ path. A family $(S_i)_i$ is a
\emph{valid joint output} for $\theta$ if each $S_i$ is a path-support for $(s_i,g_i)$ with both pins ON and no
two ON cells of distinct agents share a slot (\emph{slot exclusion}). A query is \emph{feasible} if a valid
joint output exists.
\end{definition}

\subsection{The cavity is the answer for one agent}
\label{sec:cavity}

\begin{definition}[Cavity]
\label{def:cavity}
For agent $i$ under query $\theta$ the \emph{individual cavity} is
$\Cav_i(\theta)=\{q\in Q_i:s_i\leadsto q\}\cap\{q\in Q_i:q\leadsto g_i\}$: the intersection of forward
reachability from $s_i$ with backward reachability to $g_i$, computed inside $D_i$ alone and ignoring all
conflicts.
\end{definition}

\begin{lemma}[The cavity is the maximum individual path-support]
\label{lem:cavity-max}
On a finite DAG with $s\leadsto g$, the set $\Cav=\{q:s\leadsto q\}\cap\{q:q\leadsto g\}$ is an $s$--$g$
path-support, and every $s$--$g$ path-support $S$ satisfies $S\subseteq\Cav$. Hence $\Cav$ is the unique
maximum path-support under inclusion.
\end{lemma}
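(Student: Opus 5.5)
The plan is to prove the two inclusions separately, working directly from the path-based definition of path-support in Definition~\ref{def:validity}: every $q\in S$ lies on a directed $s\to g$ path with all cells in $S$.

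\textbf{Step 1: $\Cav$ is an $s$--$g$ path-support.} First, $s\in\Cav$. Indeed $s\leadsto s$ by the length-$0$ convention, and $s\leadsto g$ by hypothesis. The same reasoning gives $g\in\Cav$. Now take any $q\in\Cav$ and choose a directed path $P_1$ from $s$ to $q$ and a directed path $P_2$ from $q$ to $g$. Their concatenation $P$ is a directed walk from $s$ to $g$ through $q$. Since the graph is a DAG, and each edge raises $\tau$ by exactly one, this walk has no repeated vertex and is a genuine path. Every cell $r$ on $P$ satisfies $s\leadsto r$, via a prefix of $P$, and $r\leadsto g$, via a suffix of $P$. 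Hence $P\subseteq\Cav$, so $q$ lies on an $s\to g$ path inside $\Cav$.

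\textbf{Step 2: every path-support $S$ satisfies $S\subseteq\Cav$.} Take $q\in S$. By definition $q$ lies on some directed $s\to g$ path. Its prefix witnesses $s\leadsto q$ and its suffix witnesses $q\leadsto g$, so $q\in\Cav$. If one prefers the $\PRED$/$\SUCC$ formulation instead, the same conclusion follows by repeated backtracking. Starting from $q$, apply $\PRED$ to step to a predecessor in $S$. Because $\tau$ strictly decreases at each step and $Q$ is finite, the process must terminate, and it can only stop at $s$. This gives $s\leadsto q$, and the symmetric argument with $\SUCC$ gives $q\leadsto g$.

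\textbf{Step 3: uniqueness.} $\Cav$ is itself a path-support by Step 1, and it contains every path-support by Step 2. It is therefore the greatest element of the family of $s$--$g$ path-supports ordered by inclusion, and a greatest element is unique.

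No step is a real obstacle. The only points needing care are the two I handled explicitly: the concatenation in Step 1 is a simple path, which acyclicity guarantees, and the pins themselves lie in $\Cav$, which the length-$0$ convention for $\leadsto$ guarantees.
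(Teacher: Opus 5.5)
Your proposal is correct and follows essentially the same route as the paper's proof: concatenate an $s\leadsto q$ path with a $q\leadsto g$ path, use acyclicity to make it a path lying inside $\Cav$, and for the converse read $s\leadsto q$ and $q\leadsto g$ off the $s\to g$ path through $q\in S$. Your explicit check that both pins lie in $\Cav$, and your alternative \PRED/\SUCC\ backtracking argument, are sound additions that the paper leaves implicit.
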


\begin{proof}
Let $q\in\Cav$. Concatenating a path $s\leadsto q$ with a path $q\leadsto g$ gives a walk through $q$; as $D$
is acyclic the walk is a path. Every cell $r$ on it satisfies $s\leadsto r$ and $r\leadsto g$, so $r\in\Cav$;
thus $q$ lies on an $s\to g$ path inside $\Cav$. Conversely if $q\in S$ then some $s\to g$ path inside $S$
passes through $q$, whence $s\leadsto q$ and $q\leadsto g$.
\end{proof}

\begin{corollary}[Single-agent trivialization]
\label{cor:single-trivial}
If each cell is given its own cavity bit as a runtime input, a single-agent EA instance is solved in
\emph{zero} rounds with \emph{no} communication: every cell outputs its own cavity bit, which by
Lemma~\ref{lem:cavity-max} is a valid path-support mask with both pins ON. If a thin output is demanded, the
any-choice forward extraction of the companion paper \cite{paperS} thins it.
\end{corollary}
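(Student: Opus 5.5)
The corollary follows from Lemma~\ref{lem:cavity-max} once the runtime input is matched against the definition of a valid joint output. With a single agent there is no second agent with which to share a slot, so slot exclusion holds vacuously. A valid output is therefore just an $s$--$g$ path-support with both pins ON, and a query is feasible precisely when $s\leadsto g$, which a query guarantees by definition. The machine is the one in which every cell $q$ outputs, in the query-independent output round fixed at round $0$, the cavity bit $[q\in\Cav(\theta)]$ it was handed as input. Each output depends only on the cell's own input, so no message is sent and no round elapses.

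For correctness, the set of ON cells is exactly $\Cav(\theta)$, and Lemma~\ref{lem:cavity-max} says this is an $s$--$g$ path-support. I will also check explicitly that both pins are ON. Since $s\leadsto s$ and $s\leadsto g$ (paths may have length $0$), $s\in\Cav$; symmetrically $g\in\Cav$. So the mask is valid for every feasible query, and the machine is trivially both forward-only and backward-only, because no dependence on other cells exists at all. This last point is what will later make the single-agent direction theorem vacuous.

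For the thin variant, I apply the companion paper's any-choice forward extraction \cite{paperS} to the path-support $\Cav$ to obtain a single directed $s\to g$ path inside it: start at $s$ and repeatedly pick a successor inside $S$, which exists by \SUCC, until reaching $g$. Termination follows from acyclicity and the ranking. That the path ends at $g$ rather than stalling follows from \SUCC\ holding at every non-$g$ cell of $S$ together with finiteness, since every non-$g$ cell of $S$ has a successor in $S$ and ranks strictly increase. This part is not zero-round: a sequential choice along the time axis needs communication proportional to the depth. I read the statement's second sentence as asserting only that a thinning exists, not a round bound, and would note that explicitly.

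The main obstacle is interpretive rather than mathematical. I need to pin down that feeding the cavity bit as input is permitted by the machine class, and that ``zero rounds'' means the output round coincides with the initial state. Once these are fixed, the argument is the two-line verification above.
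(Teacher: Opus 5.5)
Your proposal is correct and matches the paper's own justification, which is just the direct appeal to Lemma~\ref{lem:cavity-max}. Your additions (slot exclusion being vacuous for one agent, both pins lying in $\Cav$ via length-$0$ paths, and reading the thin clause as an existence claim with no zero-round guarantee) are consistent with how the paper states the corollary.
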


The cavity is not a hint about the single-agent problem; it \emph{is} the answer to it. What remains hard once
it is supplied is not individual reachability but \emph{joint extendability}: which of an agent's individually
legal branches can be extended to a goal simultaneously with the other agents'. That is exactly the object the
companion paper shows to be neither union-closed nor a product of its marginals \cite{paperS}.

\subsection{A no-go: unrestricted solvers need no direction restriction}
\label{sec:nogo}

\begin{proposition}[Joint lift; synchronized instances]
\label{prop:jointlift}
Assume all agents share one set of discrete time layers and every edge advances exactly one layer. Form the
\emph{joint-configuration DAG} whose nodes are tuples of pairwise non-conflicting cells at a common time, with
an edge when every coordinate advances along its agent's DAG. Then (i) joint feasibility is decidable by a
single forward sweep; (ii) a witness joint trajectory is obtained by the same sweep with predecessor
bookkeeping; (iii) the joint analogue of the cavity requires one forward and one backward sweep.
\end{proposition}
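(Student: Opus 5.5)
The plan is to reduce the multi-agent problem to a single-agent reachability problem on the joint-configuration DAG, call it $\mathcal{J}$, and then apply the single-agent facts already established, chiefly Lemma~\ref{lem:cavity-max}, to $\mathcal{J}$.

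\emph{Setting up $\mathcal{J}$.} I index the nodes of $\mathcal{J}$ by time $t$. A node at time $t$ is a tuple $(q_1,\dots,q_a)$ with $q_i\in Q_i$, $\tau(q_i)=t$, and no two coordinates sharing a slot. There is an edge $(q_i)_i\to(q'_i)_i$ exactly when $(q_i,q'_i)\in E_i$ for every $i$. Because every edge advances one layer, $\mathcal{J}$ is a ranked DAG. The query supplies a joint source $\mathbf{s}=(s_i)_i$ and a joint goal $\mathbf{g}=(g_i)_i$. The synchronization hypothesis gives common departure and arrival layers. If the pins differ in time across agents, I would restrict the tuple to the agents active at time $t$; I'll treat that as a bookkeeping variant. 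I also check that $\mathbf{s}$ and $\mathbf{g}$ are themselves nodes of $\mathcal{J}$, i.e.\ the pins do not conflict. If they do conflict, the query is infeasible and the sweep correctly reports this.

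\emph{The key correspondence.} Feasible joint outputs correspond to directed $\mathbf{s}\to\mathbf{g}$ paths in $\mathcal{J}$. In one direction, a path in $\mathcal{J}$ projects coordinatewise to thin trajectories. Each projection is a path-support, and slot exclusion holds because every node of the path is conflict-free and slots respect time. In the other direction, take a valid joint output $(S_i)_i$. By Definition~\ref{def:validity}, each $S_i$ contains an $s_i\to g_i$ path $P_i$. Zipping the $P_i$ layer by layer gives tuples. These tuples are conflict-free because ON cells of distinct agents never share a slot. Consecutive tuples are joined by an edge of $\mathcal{J}$. This step is the only place where the synchronized-layer hypothesis is essential, and I expect it to be the main subtlety. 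Without common layers, the zipped tuples would not advance together, so I will state explicitly that every $P_i$ has the same length.

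\emph{The three claims.}
\begin{itemize}
\item[(i)] Feasibility is equivalent to $\mathbf{s}\leadsto\mathbf{g}$ in $\mathcal{J}$. This is decided by one forward sweep: compute the reachable set layer by layer, starting at $\mathbf{s}$, and test whether $\mathbf{g}$ is reached.
\item[(ii)] During the same sweep, I record for each newly reached node one reachable predecessor. If $\mathbf{g}$ is reached, I follow these pointers back from $\mathbf{g}$ to obtain a path. Projecting that path gives a witness, by the correspondence above.
\item[(iii)] I define the joint cavity $\mathcal{C}_{\mathcal{J}}$ as the intersection of forward reachability from $\mathbf{s}$ with backward reachability to $\mathbf{g}$. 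By Lemma~\ref{lem:cavity-max} applied verbatim to the DAG $\mathcal{J}$, this set is the unique maximum joint path-support. Computing it needs the forward reachable set and a backward sweep from $\mathbf{g}$.
\end{itemize}

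For (iii) I will add a brief remark on why one sweep does not suffice: membership in $\mathcal{C}_{\mathcal{J}}$ depends on $\mathbf{g}$ through reachability to $\mathbf{g}$. The proposition only claims that the two sweeps suffice, so this remark is informal.
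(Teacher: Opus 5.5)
Your proposal is correct and follows the paper's proof. Like the paper, you identify $\mathbf{s}\to\mathbf{g}$ paths in the joint-configuration DAG with conflict-free joint trajectories, then read off (i) as reachability, (ii) as predecessor backtracking, and (iii) as the intersection of forward and backward reachability; you add detail the paper leaves implicit (the thick-to-thin extraction of an $s_i\to g_i$ path from each $S_i$, so that feasibility in the sense of Definition~\ref{def:validity} matches joint reachability, and the appeal to Lemma~\ref{lem:cavity-max} on $\mathcal{J}$), but the route is the same.
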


\begin{proof}
A joint start-tuple-to-goal-tuple path in this graph is exactly a conflict-free joint trajectory. (i) is
reachability of the goal tuple; (ii) records one predecessor per reached node and backtracks; (iii) is the
intersection of forward reachability from the start tuple with backward reachability from the goal tuple.
\end{proof}

So ``every solver of multi-agent EA must be bidirectional'' is \emph{false}: for the feasibility bit a
forward-only computation suffices. The price is a state space exponential in the number of agents, which is
what a cellular machine refuses to pay. A second concession: as a \emph{graph predicate}, cavity membership is
centrally trivial---two reachability queries to any oracle or labeling scheme \cite{cohen2003}---so no direction
claim can attach to the predicate. Whatever necessity exists must live in the \emph{operational} pairing of
runtime endpoints, per-cell distributed output and restricted dissemination. It also matters which output task
is posed: a global feasibility bit, one selected path, the exact maximal support, and \emph{any} valid
path-support mask are four different tasks, and this paper concerns the last.

\section{The machine class}
\label{sec:class}

\begin{definition}[Cellular machines for exact arrival]
\label{def:class}
$\CEA(k)$ consists of the machines satisfying all of the following on a fixed instance
(Definition~\ref{def:instance}).

\smallskip
\noindent\textbf{(O) Offline knowledge.} The topology $(D_i)_i$ and the slot map $\sigma$ are known offline and
may be compiled into the transition rule; in particular a cell knows its own static address. The machine is
\emph{endpoint-uniform}: one transition rule family serves \emph{all} queries on that topology.

\smallskip
\noindent\textbf{(I) Runtime input, per cell.} Cell $q\in Q_i$ receives (i) its endpoint markers
$\texttt{isStart}(q)=[q=s_i]$ and $\texttt{isGoal}(q)=[q=g_i]$; (ii) its own cavity bit
$c(q)=[q\in\Cav_i(\theta)]$; and, in the \emph{generous} variant, (iii) the multiset of cavity bits
$\{c(q'):q'\text{ conflict-adjacent to }q\}$. It receives no conflict-pruned joint cavity, no joint
extendability table, no selected path, no other agent's endpoint markers, and no global description of the
query. Write $\mathrm{lab}_\theta(q)$ for this input together with $q$'s static data, and call it $q$'s
\emph{initial labelled view}.

\smallskip
\noindent\textbf{(S) Factored state.} One automaton per cell; the global state is the product of cell states;
there is no processor per joint tuple. The number of states per cell must be finite for each fixed topology but
is \emph{not} assumed constant---see Section~\ref{sec:nonclaims}, item~\ref{nc:bus}.

\smallskip
\noindent\textbf{(R) Rounds, with phases.} Deterministic and synchronous. Round $t\ge1$ consists of three
phases. \emph{Aggregate}: the flag word $G_t$ is computed from the round-$(t{-}1)$ cell states. \emph{Send}:
each cell computes, from its round-$(t{-}1)$ state and $G_t$, one message per incident channel. \emph{Receive}:
each cell computes its round-$t$ state from its round-$(t{-}1)$ state, $G_t$, and the messages that arrived on
its incident channels in this round. No within-round combinational cascade across cells is permitted: a message
sent in round $t$ can influence a state only from round $t$ on, and can influence a message only from round
$t{+}1$ on.

\smallskip
\noindent\textbf{(L) Local channels.} Messages travel along both temporal orientations of each DAG edge---the
\emph{increasing-time} and \emph{decreasing-time} ports---and across conflict adjacencies. The local message
alphabet must be finite but its width is not otherwise restricted.

\smallskip
\noindent\textbf{(G) Global aggregate flags.} There are $k\ge0$ flags. Each carries \emph{exactly one bit} per
round, computed by a fixed associative and commutative aggregation of the cell states, and is read by every
cell in the same round it is computed. An $R$-round run has transcript $\Gamma_R=(G_1,\dots,G_R)$ of exactly
$kR$ bits. (Admitting a separate round-$0$ aggregate would make it $k(R+1)$; we fix the convention to close the
off-by-one.)

\smallskip
\noindent\textbf{(T) Termination.} All cells fix their output at a common round $R$ depending on the topology
and the machine but \emph{not} on the query. No query-dependent external sampling of intermediate states is
permitted.

\smallskip
\noindent\textbf{(Y) Output.} Each cell outputs one membership bit and \emph{the bits are the answer}: no
central postprocessing, no query-dependent reinterpretation, no policy output. On a feasible query the ON-set
must be a valid joint output (Definition~\ref{def:validity}) according to the machine's declared convention,
path-support or thin. $\mathrm{FAIL}$ is an aggregate read only at final readout, never fed back to the cells,
and is forbidden on feasible queries. Only feasible queries are used below.

\smallskip
\noindent\textbf{(D) Directionality, by actual dependence.} A machine is \emph{forward-only} if its
aggregation, message and state functions and its outputs are independent of all messages arriving on
decreasing-time ports---equivalently, in the round-unrolled dependency graph, in which an edge is drawn only
when changing an input coordinate can change some later state, message or output, no past-directed temporal
edge occurs. \emph{Backward-only} is the dual. Conflict edges and flags are \emph{time-neutral} and are not
subject to the restriction. A dead read of a port on which nothing depends does not make a machine
bidirectional.
\end{definition}

\begin{remark}[Why the definition is this long]
\label{rem:long}
Every clause closes an escape route by which a direction lower bound could be evaded: (S) against joint-state
expansion, (L) and (G) against out-of-band endpoint distribution, (T) against smuggling information through a
query-dependent halting time, (Y) against a central decoder that reinterprets cell bits, and (D) against a
machine that syntactically reads a port it does not use.
\end{remark}

\begin{remark}[This is a nonuniform cellular network, and that is deliberate]
\label{rem:nonuniform}
Clauses (O) and (S) let the transition code and the per-cell state depend on the topology. This is \emph{not}
the classical cellular automaton, in which one finite state set and one local rule are repeated identically
across a uniform array; it is a nonuniform, topology-indexed cellular network. The choice is deliberate and it
strengthens the lower bounds: a machine allowed to compile the whole topology and every cell's address into its
rule is at least as powerful as one restricted to a single repeated rule, so an impossibility proved against
the former covers the latter \emph{for the task posed here}. What does not transfer is anything about the
classical cellular automaton setting, and the reason is not nonuniformity but the task: that setting has no
runtime endpoint query, no reachability cache, no per-cell mask output and no ranked-DAG topology
(Section~\ref{sec:nonclaims}, item~\ref{nc:nonuniform}). It correspondingly weakens the upper bounds---Proposition~\ref{prop:U1} does use static addresses
and a compiled round schedule, so its tightness is a statement inside this formal class and \emph{not} inside
the bounded-state regime of Section~\ref{sec:hwassumptions}. The relation to the
uniform one-way cellular automata literature is taken up in Section~\ref{sec:onewayca}.
\end{remark}

\subsection{The class is not vacuous}
\label{sec:nonvacuous}

Before proving that one-directional machines fail we must know that the class contains machines that succeed;
otherwise the results below would only distinguish degrees of failure.

\begin{proposition}[Non-vacuity: gathering]
\label{prop:nonvacuous}
Let $D$ be the undirected diameter of the local fabric---the maximum, over its connected components, of the
component diameter. For every instance there is a machine in $\CEA(0)$ that, on every feasible query, outputs
a correct valid joint mask at the common round $D$. On the composite family $\mathcal{H}_p$ of
Corollary~\ref{cor:composite} that machine is neither forward-only nor backward-only, since no machine in
$\CEA(0)$ correct on $\mathcal{H}_p$ can be either. (On $G^\star$ alone---the gadget of
Definition~\ref{def:gadget}---only forward-only machines are excluded, and on its time reversal
$\widehat{G^\star}$ of Definition~\ref{def:reversal} only backward-only ones.)
\end{proposition}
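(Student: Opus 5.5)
The plan is a flooding (gathering) construction followed by a deterministic compiled decoder. Fix the instance. By (O) the whole topology, slot map and cell addresses may be compiled into every cell's rule, and by (S) the per-cell state space may depend on the topology. In each round every cell sends its entire current state on every incident channel: both temporal ports of each DAG edge and each conflict adjacency. On receipt it stores the union of the labelled-view records it knows. Initially cell $q$ knows $\{(q,\mathrm{lab}_\theta(q))\}$. The state space is the set of partial maps from $Q$ to the finite set of possible initial views, which is finite for the fixed topology.

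By induction on $t$, and using the phase discipline of (R), after round $t$ cell $q$ knows exactly the views of all cells within undirected fabric distance $t$ of $q$. The base case is immediate. For the step, a cell at distance $t{+}1$ has a neighbour at distance $t$ whose round-$t$ state, sent in round $t{+}1$, contains that view. Hence at round $D$ each cell knows $\mathrm{lab}_\theta(q')$ for every $q'$ in its connected component. This round is common and query-independent, as (T) requires.

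The next step is to argue that this suffices to decide a globally consistent output. Components of the fabric never share slots across each other, since conflict adjacencies are fabric edges. Each agent's DAG $D_i$ is connected in the fabric only if it is weakly connected. The pins of agent $i$ lie in the same weak component of $D_i$, because $s_i\leadsto g_i$. Within one fabric component the endpoint markers of every agent touching it are visible. A valid joint output therefore decomposes over fabric components: ON cells of agent $i$ outside the component containing its pins can be set OFF, and slot exclusion only couples cells in the same component.

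So fix, at compile time, a deterministic canonical choice function. For each component and each assignment of endpoint data to the agents pinned in it, take the lexicographically first valid joint restriction, or an arbitrary mask when none exists. Set every agent with no pin in the component to OFF. Each cell outputs its own bit of that choice. All cells of a component hold identical knowledge and apply the same compiled function, so their bits are mutually consistent.

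The step I expect to need care is precisely this consistency and decomposition argument. I must check that the feasible restriction in each component exists whenever $\theta$ is feasible; the restriction of a global valid output serves. I must also check that agents whose DAG spans several components are handled: their path-support lies entirely in the component containing both pins, since path-supports are subsets of directed $s_i\to g_i$ paths. There are no flags, so $k=0$.

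The directionality statements then need no new work. The machine is correct on every feasible query of $\mathcal{H}_p$, so Corollary~\ref{cor:composite} forbids it from being forward-only or backward-only in the sense of (D). Likewise Theorem~\ref{thm:A} and its time-reversed counterpart give the parenthetical claims for $G^\star$ and $\widehat{G^\star}$.
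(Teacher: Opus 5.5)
Your proposal is correct and is essentially the paper's proof. Both arguments flood for $D$ rounds; observe that valid joint outputs factor over fabric components, because every path-support is joined to its pins by DAG edges and slot exclusion couples only conflict-adjacent cells; have each component deterministically emit its lexicographically first valid restriction, which is feasible as the restriction of a global valid output; and read the direction claims off Corollary~\ref{cor:composite} and Theorem~\ref{thm:A}. The only differences are cosmetic: you detect agents not pinned in a component from the gathered endpoint markers, where the paper uses each cell's own cavity bit, and your induction step should say explicitly that the relay is a fabric neighbour of $q$ lying at distance $t$ from the far cell, not a neighbour of the far cell.
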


\begin{proof}
The machine floods. Each cell starts holding its own static data and its initial labelled view; in every round
each cell sends everything it currently holds on every incident channel and adds to its store everything it
receives. By induction on $t$, after $t$ rounds a cell holds the initial labelled views of every cell within
distance $t$ in the local fabric, so after $D$ rounds it holds those of its entire connected component. The
round $D$ depends on the topology only, as clause (T) requires.

It remains to check that a cell's own component is enough. First, the validity constraint of
Definition~\ref{def:validity} decomposes over components. Every cavity lies inside a single component: if
$q\in\Cav_i$ then $s_i\leadsto q\leadsto g_i$, so $q$, $s_i$ and $g_i$ are joined by DAG edges, which are
fabric edges. By Lemma~\ref{lem:cavity-max} every valid $S_i$ is contained in $\Cav_i$, so a cell outside its
own agent's cavity is OFF in every valid output---and it knows this from its own cavity bit alone, with no
communication. Second, slot exclusion couples only conflict-adjacent cells, and conflict adjacencies are
fabric edges, so it never constrains two distinct components against each other. Hence the set of valid joint
outputs is a Cartesian product over components.

Fix once and for all the total order on cells given by their static addresses. Each cell now computes, by the
same deterministic procedure applied to the data of its own component, the lexicographically first valid joint
output of that component's subproblem, and emits its own bit of it. That subproblem is feasible, being the
restriction of a valid joint output of the whole query, which exists by hypothesis. All cells of a component
agree because they apply the same function to the same data, and across components no agreement is needed
because the constraint does not couple them. Finally, the lexicographically first element of a Cartesian
product is the tuple of the lexicographically first elements of its factors, so the assembled output is the
lexicographically first valid joint output of the whole instance, and in particular is valid.
\end{proof}

Four things must be said about this construction, because they are exactly what the rest of the paper is
about.

\begin{enumerate}[leftmargin=1.6em,itemsep=2pt]
\item \textbf{It is expensive in the right currency.} Per-cell state and message width grow with the instance.
Clauses (S) and (L) permit that, so the machine is in the class; the bounded-state hardware subclass
(H1)--(H7) of Section~\ref{sec:hwassumptions} does \emph{not} permit it. Proposition~\ref{prop:nonvacuous} is
therefore a statement about the formal class, not a hardware proposal.
\item \textbf{It needs both orientations.} Gathering along decreasing-time ports is what carries information
from later cells to earlier ones. A forward-only machine cannot gather in that direction at all, which is the
content of Theorem~\ref{thm:A}. (Flooding is of course \emph{also} bidirectional on instances where one
direction would have sufficed; the proposition's content is that it succeeds everywhere, and
Theorem~\ref{thm:A} supplies the instances where both directions are genuinely required.)
\item \textbf{It puts the three regimes in one picture}, on the hard families of Section~\ref{sec:mainA}.
Bidirectional with no global channel: solvable, in $D$ rounds. Forward-only with no global channel: \emph{not}
solvable, even with cavities supplied and unboundedly many rounds (Theorem~\ref{thm:A}). Forward-only with $k$
flags: solvable, at the price $J_R+kR\ge p$, where $J_R$ is the cone-label complexity of
Definition~\ref{def:JR} (Theorem~\ref{thm:Bprime}).
\item \textbf{It shows the obstruction is the direction.} Since the class does contain a machine solving every
instance, the failure of forward-only machines cannot be blamed on the class being too weak to express a
solution. What they lack is an orientation.
\end{enumerate}

\section{Warm-up: the single-agent theorem, and its collapse}
\label{sec:warmup}

We first record the locality lemma in the form used throughout, then the direction theorem available
\emph{without} cavity input, and then the fact that it does not survive the concession of
Section~\ref{sec:cavity}.

\begin{lemma}[Conditioned locality]
\label{lem:cone}
Fix a machine $M\in\CEA(k)$ and a flag transcript $\gamma=(g_1,\dots,g_R)$. For every cell $z$ and every
$t\le R$, the round-$t$ state of $z$, every message $z$ sends in rounds $\le t$, and every contribution $z$
makes to the flag words $G_1,\dots,G_t$, are all functions of $\gamma_{\le t}$ together with the initial
labelled views $\mathrm{lab}_\theta(y)$ of the cells $y$ in a radius-$t$ neighbourhood of $z$, where the
neighbourhood is
\begin{enumerate}[leftmargin=2.0em,itemsep=1pt,label=(\alph*)]
\item the radius-$t$ \emph{undirected ball} of $z$ in the local fabric, for an arbitrary $M$;
\item if $M$ is forward-only, the radius-$t$ \emph{forward ancestor cone}
$\Cone^{\rightarrow}_t(z)$, consisting of the cells that reach $z$ in at most $t$ steps along increasing-time
edges and conflict adjacencies;
\item if $M$ is backward-only, the radius-$t$ \emph{descendant cone} $\Cone^{\leftarrow}_t(z)$, consisting of
the cells reachable from $z$ in at most $t$ such steps.
\end{enumerate}
The same holds for the round-$R$ output of $z$.
\end{lemma}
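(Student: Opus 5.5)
We argue by induction on $t$, proving for all three parts at once that the round-$t$ state of $z$, every message $z$ sends in rounds $\le t$, and $z$'s contributions to $G_1,\dots,G_t$ are functions of $\gamma_{\le t}$ and of $\mathrm{lab}_\theta$ restricted to the relevant neighbourhood $\mathcal{N}_t(z)$. Here $\mathcal{N}_t(z)$ is the undirected ball in case (a), $\Cone^{\rightarrow}_t(z)$ in case (b), and $\Cone^{\leftarrow}_t(z)$ in case (c).

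The key point is that conditioning on $\gamma$ removes the aggregate phase as a channel. Once the transcript is fixed, $G_t$ is replaced by the constant $g_t$ wherever it enters a send or receive function. This is legitimate because the statement asserts only that the quantities are functions of $\gamma_{\le t}$ and the local labels, not that they are consistent with $\gamma$. Hence we never need to evaluate the aggregation itself. A cell's contribution to $G_t$ is a function of its round-$(t-1)$ state, so it falls under the inductive hypothesis at $t-1$.

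\emph{Base case} $t=0$. The initial state of $z$ is a function of $\mathrm{lab}_\theta(z)$ alone, including its static data. Since $z\in\mathcal{N}_0(z)$ in every case, the claim holds. No messages or flags have yet occurred.

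\emph{Inductive step.} By clause (R), in round $t$ cell $z$ sends messages computed from its round-$(t-1)$ state and $g_t$. By induction these depend only on $\gamma_{\le t}$ and on labels in $\mathcal{N}_{t-1}(z)\subseteq\mathcal{N}_t(z)$. The round-$t$ state of $z$ is computed from its round-$(t-1)$ state, $g_t$, and the round-$t$ messages arriving on its channels. Each such message was sent by a fabric neighbour $y$ using only $y$'s round-$(t-1)$ state and $g_t$, since there is no within-round cascade. By induction it therefore depends on labels in $\mathcal{N}_{t-1}(y)$, and $\mathcal{N}_{t-1}(y)\subseteq\mathcal{N}_t(z)$ whenever $y$ is a one-step neighbour of $z$ of the admissible kind. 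In case (a) every neighbour is admissible, and the ball inclusion is immediate.

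In case (b) we invoke clause (D). Because $M$ is forward-only, the receive function of $z$ is independent of every message arriving on a decreasing-time port. The only messages that matter come from $y\in\pred(z)$, along an increasing-time edge, or from conflict-adjacent $y$. Any cell reaching $y$ in at most $t-1$ such steps reaches $z$ in at most $t$, so $\Cone^{\rightarrow}_{t-1}(y)\subseteq\Cone^{\rightarrow}_t(z)$. Formally, we fix the ignored inputs to arbitrary values; independence means the resulting function is the same. Case (c) is the exact dual, with $\suc(z)$ in place of $\pred(z)$.

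The output statement follows because, by (T) and (Y), the output is a function of the round-$R$ state.

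The only delicate step is clause (D). It is phrased as actual dependence, possibly including dependence through flags. The required reading is that the message and state functions themselves ignore decreasing-time inputs, and this is exactly what the first formulation of (D) states. I would cite that formulation, so that a flag cannot carry backward information into $z$ except through the fixed $\gamma$ already being conditioned on.
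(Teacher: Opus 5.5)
Your proof is correct and follows the paper's argument essentially step for step. Both use a single induction on the round that carries state, outgoing messages and flag contributions together; the flag words are fixed by $\gamma$, the no-cascade phase semantics gives $\mathcal{N}_{t-1}(y)\subseteq\mathcal{N}_t(z)$ for causally effective in-neighbours, and clause (D) restricts those in-neighbours to predecessors and conflict neighbours in case (b), dually in case (c). Your remark that overriding $G_t$ by $g_t$ is harmless, because the overridden run coincides with the actual run whenever $\gamma$ is the true transcript, is a slightly more explicit rendering of the paper's ``conditioning on $\gamma$ fixes its value,'' not a different route.
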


\begin{proof}
We prove the three statements by one induction on $t$, carrying the state, the outgoing messages and the flag
contributions together. Write $\Cone_t(z)$ for whichever neighbourhood is at issue, and note that in each of
the three cases $\Cone_t(y)\subseteq\Cone_{t+1}(z)$ whenever $y$ is a causally effective in-neighbour of $z$.

\emph{Base.} At $t=0$ no message has been exchanged and the state of $z$ is exactly $\mathrm{lab}_\theta(z)$,
the radius-$0$ neighbourhood; no message has been sent and no flag word computed.

\emph{Step.} Assume the claim through round $t$. By the \emph{aggregate} phase of (R), $G_{t+1}$ is a fixed
aggregation of round-$t$ states; conditioning on $\gamma$ fixes its value, and by the inductive hypothesis
$z$'s own contribution to it is a function of $\gamma_{\le t}$ and $\Cone_t(z)$, hence of $\gamma_{\le t+1}$
and $\Cone_{t+1}(z)$. By the \emph{send} phase, the messages $z$ emits in round $t{+}1$ are functions of its
round-$t$ state and of $G_{t+1}$, hence of $\gamma_{\le t+1}$ and $\Cone_t(z)\subseteq\Cone_{t+1}(z)$. By the
\emph{receive} phase, the round-$(t{+}1)$ state of $z$ is a function of its round-$t$ state, of $G_{t+1}$, and
of the messages that reached it in round $t{+}1$; each such message was emitted by a causally effective
in-neighbour $y$ and, by the inductive hypothesis applied to the send phase at $y$, is a function of
$\gamma_{\le t+1}$ and $\Cone_t(y)\subseteq\Cone_{t+1}(z)$.

It remains to identify the causally effective in-neighbours in each case. In case (a) no restriction is
imposed, so they lie in the undirected radius-$1$ ball. In case (b), clause (D) removes every dependence on
messages arriving on decreasing-time ports, so the effective in-neighbours are those reached along
increasing-time edges and conflict adjacencies, that is, they lie in $\Cone^{\rightarrow}_1(z)$. Case (c) is
dual. Finally the output of $z$ is a function of its round-$R$ state, so the claim covers outputs.

Note the order of definitions: the machine class is defined first and the locality is \emph{derived};
``forward-only'' is a property of actual dependence, not a stipulation about which functions may be computed.
\end{proof}

\begin{theorem}[Single-agent direction theorem, endpoint markers only]
\label{thm:single}
Consider the variant of $\CEA(0)$ in which clause (I)(ii)--(iii) is deleted, so that the only runtime input is
the endpoint markers. On single-agent EA instances, no endpoint-uniform machine of that class correct on all
queries is forward-only, and none is backward-only.
\end{theorem}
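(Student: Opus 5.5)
The plan is to exhibit one explicit single-agent topology and two queries that a forward-only machine cannot tell apart at some cell, but on which that cell's correct output bit must differ. The backward-only case then follows by time reversal.

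Take a small ranked DAG with a layer-$0$ cell $s$, two layer-$1$ cells $u,v$, and two layer-$2$ cells $g,g'$, with edges $s\to u\to g$ and $s\to v\to g'$. There are no conflicts, since only one agent is present. Consider the queries $\theta=(s,g)$ and $\theta'=(s,g')$.

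\emph{Step 1.} By Lemma~\ref{lem:cavity-max}, every valid path-support for $\theta$ is contained in $\Cav=\{s,u,g\}$ and must contain $g$. Since $g$ has the unique predecessor $u$, the \PRED\ condition forces $u$ ON. So every correct machine outputs $1$ at $u$ under $\theta$. Symmetrically, under $\theta'$ the cell $u$ lies outside $\{s,v,g'\}$, so it must output $0$. The same conclusion holds under the thin convention.

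\emph{Step 2.} Apply Lemma~\ref{lem:cone}(b) with $k=0$, so the transcript is empty, and with the input clause reduced to endpoint markers. For a forward-only $M$, the output of $u$ at round $R$ is a function of the initial labelled views of $\Cone^{\rightarrow}_R(u)$. In the absence of conflict edges this cone is $\{s,u\}$, whatever $R$ is. The labels there are identical under $\theta$ and $\theta'$: $s$ carries $\texttt{isStart}=1$ and $\texttt{isGoal}=0$ in both, and $u$ carries all zeros in both. Because (O) and (T) make the rule and the output round $R$ query-independent, $u$ outputs the same bit under both queries. This contradicts Step 1, and it holds for every $R$.

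\emph{Step 3.} The backward-only case is dual, using Lemma~\ref{lem:cone}(c). Reverse the DAG to get goal $g$ at layer $2$, cells $u,v$ at layer $1$, and starts $s,s'$ at layer $0$, with edges $s\to u\to g$ and $s'\to v\to g$. Take $\theta=(s,g)$ and $\theta'=(s',g)$. Then $u$ must be ON under $\theta$ and OFF under $\theta'$. However, $\Cone^{\leftarrow}_R(u)=\{u,g\}$ carries identical markers under both queries, giving the same contradiction.

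The theorem asks for a single instance class on which neither restriction works, so the last move is to take the disjoint union of the two gadgets as one instance. Queries can pin endpoints only within one agent's DAG. I will therefore build a single agent's DAG containing both gadgets as weakly connected pieces, and fix the pins so that the agent's pair $s_i\leadsto g_i$ lies inside one piece. I expect this packaging to be the main obstacle: the definitions must allow such a disconnected agent DAG, or else I need a connected variant in which the extra branches stay outside the relevant cone. If disconnected DAGs are problematic, I would instead merge the two gadgets through a common layer-$0$ and layer-$2$ hub, taking care that the cone computations remain unchanged.
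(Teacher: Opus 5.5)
Your Steps 1--3 are the paper's proof: the same fork with the forced bit at $u$, the same appeal to Lemma~\ref{lem:cone}(b) with an empty transcript and a query-invariant cone contained in $\{s,u\}$ (it is just $\{u\}$ at radius $0$, which is equally harmless), and the same dual fork for backward-only machines via Lemma~\ref{lem:cone}(c). Your final packaging step is unnecessary, because the paper's proof uses two separate instances, one per orientation. It is also unproblematic if you want the stronger single-instance form: Definition~\ref{def:instance} imposes no connectivity on a ranked DAG, so the disjoint union is a legitimate single-agent instance, and neither cone changes.
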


\begin{proof}
\emph{Forward-only.} Take the fork with cells $s,a,b,g_1,g_2$, edges $s\to a\to g_1$ and $s\to b\to g_2$ and
ranks $\tau(s)=0$, $\tau(a)=\tau(b)=1$, $\tau(g_1)=\tau(g_2)=2$; being single-agent there are no conflict
adjacencies. Under $\theta_1=(s,g_1)$ the unique $s\to g_1$ path is $s\,a\,g_1$, so every valid mask---thick or
thin---sets $a$ ON; under $\theta_2=(s,g_2)$, symmetrically, $a$ must be OFF. The radius-$0$ ancestor cone of
$a$ is $\{a\}$ and every positive-radius one is contained in $\{s,a\}$; all these labelled views are identical
under the two queries, which differ only in the \texttt{isGoal} markers on $g_1,g_2$, strictly in $a$'s future.
By Lemma~\ref{lem:cone}(b) with $k=0$ a forward-only machine outputs the same bit at $a$ under both queries and errs on one.
\emph{Backward-only.} The dual fork $s_1\to a\to g$, $s_2\to b\to g$, the queries differing only in the
\texttt{isStart} markers, using Lemma~\ref{lem:cone}(c).
\end{proof}

\begin{corollary}[Theorem~\ref{thm:single} is vacuous once cavities are supplied]
\label{cor:t1-vacuous}
In $\CEA(0)$ as defined in Definition~\ref{def:class}---that is, \emph{with} the cavity bits---the fork of
Theorem~\ref{thm:single} is solved in zero rounds by a forward-only machine: cell $a$'s own cavity bit is $1$
under $\theta_1$ and $0$ under $\theta_2$, and by Corollary~\ref{cor:single-trivial} the machine that outputs
own-cavity-bit everywhere is correct.
\end{corollary}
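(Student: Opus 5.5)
The plan is to check directly that the zero-round machine which outputs its own cavity bit is in $\CEA(0)$ and is forward-only, and then to compute the cavities of the fork under the two queries. The one subtle point is forward-onlyness, and it is immediate from clause (D): the machine reads no messages at all.

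\emph{Machine.} Consider the machine whose output round is $R=0$ and whose output at cell $q$ is the cavity bit $c(q)$ from clause (I)(ii). It sends no messages, or only constant ones, and uses no flags ($k=0$). The common output round is $0$ and does not depend on the query, so clause (T) holds. State is one bit per cell, so clause (S) holds. Its output functions depend on no incoming message on any port. In particular they do not depend on decreasing-time ports, so by clause (D) the machine is forward-only; by the same argument it is also backward-only.

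\emph{Correctness.} Compute the cavities on the fork $s\to a\to g_1$, $s\to b\to g_2$.
\begin{itemize}
\item Under $\theta_1=(s,g_1)$ we have $\Cav(\theta_1)=\{s,a,g_1\}$. Here $b$ is excluded because $b\not\leadsto g_1$, and $g_2$ is excluded for the same reason. So $c(a)=1$.
\item Under $\theta_2=(s,g_2)$, symmetrically, $\Cav(\theta_2)=\{s,b,g_2\}$, so $c(a)=0$.
\end{itemize}
These are exactly the values that every valid mask must take at $a$, as shown in the proof of Theorem~\ref{thm:single}. More generally, Lemma~\ref{lem:cavity-max} and Corollary~\ref{cor:single-trivial} say that the ON-set equals $\Cav_i(\theta)$, which is a valid path-support with both pins ON. Since the instance has a single agent, slot exclusion is vacuous, so the mask is a valid joint output on every query.

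\emph{Conclusion.} The obstruction of Theorem~\ref{thm:single} came from $a$'s radius-$t$ ancestor-cone views being identical under $\theta_1$ and $\theta_2$. With clause (I)(ii) present, those views differ already at radius $0$, because $\mathrm{lab}_\theta(a)$ contains $c(a)$. Lemma~\ref{lem:cone}(b) therefore no longer forces equal outputs, and the fork witnesses nothing. The dual fork is handled identically, using $c$ for the backward-only case. There is no real obstacle; the only care needed is in reading clause (D) as actual dependence, so that a machine reading no ports counts as forward-only.
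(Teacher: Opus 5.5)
Your proof is correct and follows the same route as the paper. It uses the zero-round machine that outputs its own cavity bit, computes $\Cav(\theta_1)=\{s,a,g_1\}$ and $\Cav(\theta_2)=\{s,b,g_2\}$ so that $c(a)$ is $1$ and $0$, and gets correctness from Lemma~\ref{lem:cavity-max} and Corollary~\ref{cor:single-trivial}; your explicit checks of (T), (S) and (D) spell out what the paper leaves implicit. One trivial nit: the round-$0$ state is the whole initial labelled view $\mathrm{lab}_\theta(q)$, not a single bit, but it is finite, which is all (S) requires.
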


This is the crux. A single agent's cavity is a computation of exactly the two-sided kind---forward reachability
from $s$ intersected with backward reachability to $g$---already performed by the environment, and once it is
handed over there is nothing left for the machine's own backward channel to do. To rebuild a necessity claim on
top of the cavity, the forcing information must be moved outside the target's own cavity, that is, across a
conflict coupling into another agent. Since a single agent has no conflict couplings, the resulting statement
is \emph{necessarily} multi-agent. Corollary~\ref{cor:t1-vacuous} is confirmed by exhaustive enumeration
(Section~\ref{sec:status}).

\section{The main theorem}
\label{sec:mainA}

\subsection{The gadget}
\label{sec:gadget}

\begin{definition}[The crossed-conflict selector $G^\star$]
\label{def:gadget}
The gadget $G^\star$ has two agents and twelve cells---six on each side---arranged in four rank layers, as
displayed below.
\end{definition}

\begin{center}
\small
\begin{verbatim}
                    rank 0      rank 1       rank 2        rank 3

 agent A             a0 -------> a1 -------> a2^0 -------> g^0
 (selector)                        \-------> a2^1 -------> g^1

 agent B             b0 -------> z^0 ------> b2^0 ------\
 (responder)           \------> z^1 ------> b2^1 -------> h

 slots (vertex conflicts), crossed, at rank 2 only:
        sigma(a2^0) = sigma(b2^1) = X0
        sigma(a2^1) = sigma(b2^0) = X1
 every other cell occupies a slot of its own

 endpoints   A: (a0, g^b)   <-- the only query-dependent datum
             B: (b0, h)     <-- identical in every query
 queries     theta_b for b in {0,1};  target cell  z := z^1  (rank 1)
\end{verbatim}
\end{center}

\paragraph{The subdivided variant.}
For $L\ge1$ let $G^\star(L)$ replace the rank-$1$ cell $a1$ of $A$ by a chain $a1_1\to\cdots\to a1_L$ and each
branch cell $z^r$ of $B$ by a chain $z^r\to z^r_2\to\cdots\to z^r_L$, with the conflicts at rank $L+1$ and the
goals at rank $L+2$. Thus $G^\star(1)=G^\star$ and $G^\star(L)$ has $3L+9$ cells. Subdivision stretches the
\emph{undirected} distance from the query-dependent cells to the target to $L$ while leaving everything else
unchanged; it is used only to make Section~\ref{sec:nosubdiv} quantitative.

\paragraph{$G^\star$ is a genuine MAPF instance.}
The construction is abstract, but it is realizable as a vertex-conflict-only time-expanded instance on a shared
location graph rather than being an artefact of an arbitrary slot map. Take locations
$\ell_{a0},\ell_{a1},\ell^0,\ell^1,\ell_{g^0},\ell_{g^1},\ell_{b0},\ell_{z^0},\ell_{z^1},\ell_h$ with one
common adjacency relation. Each agent's ranked DAG is then the time expansion of the sub-graph of admissible
states that agent may occupy---$A$ traversing $\ell_{a0}\to\ell_{a1}\to\ell^{r}\to\ell_{g^{r}}$ and $B$
traversing $\ell_{b0}\to\ell_{z^{r}}\to\ell^{1-r}\to\ell_h$---embedded in that one shared location graph. The
agents need not have the same admissible sub-graph; what they share is the location graph on which conflicts
are defined. The only locations shared at a common time are
$\ell^0,\ell^1$ at rank~2, and the conflicts they generate are exactly the two crossed pairs above. Adding the
wait arcs of a standard time-expanded encoding creates no new $s\to g$ path in either agent's cavity, because
an agent that waits can no longer occupy its own goal at its pinned arrival rank. The exclusivity primitive is
vertex conflict only; edge conflicts, swaps and capacities are not modelled (Section~\ref{sec:nonclaims},
item~\ref{nc:conflicts}).

\subsection{The forced output}
\label{sec:A1}

\begin{lemma}[Forced output; uniqueness]
\label{lem:A1}
For each $b\in\{0,1\}$ the query $\theta_b$ on $G^\star(L)$ is feasible and the valid joint output is
\emph{unique}, namely
\[
S_A=\{a0,a1_1,\dots,a1_L,a2^b,g^b\},\qquad
S_B=\{b0,z^b,z^b_2,\dots,z^b_L,b2^b,h\}.
\]
In particular $Y(z^1)=b$. The statement holds verbatim under the thick (path-support) and the thin
(single-path) convention.
\end{lemma}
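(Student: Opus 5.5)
The plan is to compute both agents' cavities, use Lemma~\ref{lem:cavity-max} to confine every valid output to them, and then let slot exclusion at the conflict rank make the choice. \emph{Step 1 (cavities).} For agent $A$ under $\theta_b$, forward reachability from $a0$ is all of $Q_A$. Backward reachability to $g^b$ is the chain $a0,a1_1,\dots,a1_L$ together with $a2^b$ and $g^b$, because $a2^{1-b}$ reaches only $g^{1-b}$. Hence
\[
\Cav_A(\theta_b)=\{a0,a1_1,\dots,a1_L,a2^b,g^b\},
\]
which is a single path. For agent $B$, which has pins $(b0,h)$ in every query, both branches $b0\to z^r\to z^r_2\to\cdots\to z^r_L\to b2^r\to h$ are $s\to g$ paths, so $\Cav_B=Q_B$.

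\emph{Step 2 (agent $A$ is forced).} By Lemma~\ref{lem:cavity-max}, any valid $S_A$ satisfies $S_A\subseteq\Cav_A(\theta_b)$. Since $S_A$ is a path-support containing both pins, it contains some $a0\to g^b$ path. The only such path is the whole cavity, so $S_A$ is exactly the set in the statement. This holds under both the thick and the thin convention.

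\emph{Step 3 (slot exclusion forces $B$).} $S_A$ contains $a2^b$, whose slot $X_b$ is shared only with $b2^{1-b}$, by the crossed slot map. So $b2^{1-b}\notin S_B$. Now apply \SUCC\ backwards along the branch $r=1-b$. The cell $z^{1-b}_L$ has $b2^{1-b}$ as its sole successor, so it cannot lie in $S_B$. By induction back up the chain (each cell's sole successor is the next chain cell), no cell of branch $1-b$ lies in $S_B$. By \PRED\ at $h$, or by \SUCC\ at $b0$, the other branch $z^b,\dots,z^b_L,b2^b$ must then be entirely ON. This gives $S_B$ exactly as stated, and it is already a single path, so thick and thin coincide.

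\emph{Step 4 (feasibility and the target bit).} Check that the displayed pair is valid. Each set is an $s\to g$ path, hence a path-support with both pins ON. The only ON cells at the conflict rank are $a2^b$ (slot $X_b$) and $b2^b$ (slot $X_{1-b}$), which are distinct slots, and every other cell has a private slot. Uniqueness was established in Steps 2--3, so $Y(z^1)=[z^1\in S_B]=b$.

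The main obstacle is indexing rather than ideas. One must confirm that the subdivided chains have exactly one successor and one predecessor each, so that the \SUCC\ propagation in Step~3 is legitimate. One must also be careful that the crossing pairs $a2^b$ with $b2^{1-b}$ and not with $b2^b$, since that crossing is what makes the forced branch of $B$ match $b$.
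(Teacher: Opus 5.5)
Your proposal is correct and follows the paper's proof essentially step for step. Lemma~\ref{lem:cavity-max} confines $S_A$ to its single-path cavity, the crossed slot of $a2^b$ excludes $b2^{1-b}$, and \SUCC\ then eliminates the whole branch $1-b$ of $B$. The only difference is cosmetic: the paper lists the three path-supports of $(b0,h)$ and invokes \SUCC\ only to rule out partially-ON branches, whereas you propagate \SUCC\ backward along the blocked chain directly.
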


\begin{proof}
\emph{(i) $A$'s mask is forced.} With $g_A=g^b$ the cavity is $\Cav_A=\{a0,a1_1,\dots,a1_L,a2^b,g^b\}$: the
cell $a2^{1-b}$ cannot reach $g^b$, and $g^{1-b}$ is reachable from $a0$ only through $a2^{1-b}$, so neither
lies in $\Cav_A$. By Lemma~\ref{lem:cavity-max} every path-support is contained in $\Cav_A$; but $\Cav_A$ is a
single path and a path-support must contain an $s\to g$ path, so $S_A\supseteq\Cav_A$ and hence $S_A=\Cav_A$.
The thin convention gives the same set.

\emph{(ii) $A$ occupies one crossed slot.} Since $a2^b\in S_A$ and $\sigma(a2^b)=\sigma(b2^{1-b})$, slot
exclusion gives $b2^{1-b}\notin S_B$.

\emph{(iii) $B$'s mask is forced.} $B$'s endpoints are query-independent and both branches join $b0$ to $h$, so
$\Cav_B$ is all of $B$. By Definition~\ref{def:validity} the path-supports of $(b0,h)$ inside $B$ are exactly
three: branch $0$ only, branch $1$ only, and both branches. By (ii) the two containing branch $1-b$ are
excluded, leaving branch $b$ alone.

This is the point at which the \emph{thick} convention is pinned as tightly as the thin one. Suppose $S_B$
contained $z^{1-b}$ but not $b2^{1-b}$. Then along $z^{1-b}\to z^{1-b}_2\to\cdots\to b2^{1-b}$ the last ON cell
would have no ON successor, violating \SUCC. A partially-ON branch is not a path-support. This is precisely the
obstruction of Section~\ref{sec:obstruction}: whether an early cell may be switched ON is decided by whether a
later cell of the same branch is blocked by somebody else.

\emph{(iv) Feasibility.} The pair in (i) and (iii) is a path-support for each agent with both pins ON, and $A$
uses slot $X_b$ while $B$ uses $X_{1-b}$. \emph{(v)} By (iii), $z^1\in S_B\iff b=1$.
\end{proof}

\subsection{Cone invariance}
\label{sec:A2}

\begin{lemma}[Cone invariance]
\label{lem:A2}
Write $z=z^1$ for the target of $G^\star(L)$. Then $\Cone^{\rightarrow}_0(z)=\{z^1\}$ and
$\Cone^{\rightarrow}_t(z)=\{b0,z^1\}$ for every $t\ge1$; in particular the cone is contained in $\{b0,z^1\}$ at
every radius. All runtime inputs of clause (I) on those two cells, \emph{including} the generous slot-local
foreign cavity bits of (I)(iii), are identical under $\theta_0$ and $\theta_1$.
\end{lemma}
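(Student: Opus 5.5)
The plan is a direct computation of the cone, followed by a check of the clause (I) inputs on the cells it contains. Everything depends only on the shape of agent $B$'s DAG near its source and on the slot map of Definition~\ref{def:gadget} (with the subdivision of $G^\star(L)$), so no machine-level argument is needed. Lemma~\ref{lem:cone}(b) is used only afterwards, by whoever invokes this lemma.

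\emph{Step 1: the cone.} By Lemma~\ref{lem:cone}(b), $\Cone^{\rightarrow}_t(z)$ is the set of cells that reach $z$ in at most $t$ steps. Each step is either along an increasing-time DAG edge, traversed towards $z$, or across a conflict adjacency, traversed in either direction because conflict edges are time-neutral. I will compute the one-step in-neighbourhoods of the relevant cells.

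In $G^\star(L)$ the cell $z^1$ is the head of its chain, so $\pred(z^1)=\{b0\}$. The conflicts of $G^\star(L)$ sit only at rank $L+1$, between $a2^{r}$ and $b2^{1-r}$. The cell $z^1$ has rank $1\le L$ and a slot of its own, so it has no conflict neighbour. The cell $b0$ has rank $0$, so $\pred(b0)=\varnothing$, and it also occupies its own slot.

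Hence the one-step in-neighbourhood of $z^1$ is $\{b0\}$, and that of $b0$ is empty. This gives $\Cone^{\rightarrow}_0(z)=\{z^1\}$ and $\Cone^{\rightarrow}_1(z)=\{b0,z^1\}$. Since the set $\{b0,z^1\}$ is closed under taking in-neighbours, induction on $t$ shows $\Cone^{\rightarrow}_t(z)=\{b0,z^1\}$ for every $t\ge1$.

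\emph{Step 2: the inputs.} I go through clause (I) item by item on the cells $b0$ and $z^1$.

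\begin{itemize}
\item \emph{Static data.} These are query-independent by (O).
\item \emph{Endpoint markers.} Both cells belong to $B$, and $B$'s endpoints are $(b0,h)$ in every query. So $\texttt{isStart}(b0)=1$, $\texttt{isStart}(z^1)=0$, and both \texttt{isGoal} markers are $0$, under both $\theta_0$ and $\theta_1$. Clause (I) forbids a cell to see another agent's endpoint markers, so the only query-dependent datum, $g_A=g^b$, does not enter.
\item \emph{Own cavity bits.} As in Lemma~\ref{lem:A1}(iii), $\Cav_B$ is all of $B$ in both queries, because $B$'s endpoints do not move. So $c(b0)=c(z^1)=1$ in both queries.
\item \emph{Generous foreign cavity bits, (I)(iii).} This is the multiset of cavity bits over conflict-adjacent cells. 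Both cells have no conflict neighbours, so it is the empty multiset under both queries.
\end{itemize}

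Thus $\mathrm{lab}_{\theta_0}(y)=\mathrm{lab}_{\theta_1}(y)$ for $y\in\{b0,z^1\}$.

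\emph{Expected main obstacle.} The only delicate point is the conflict clause in the cone. One must remember that conflict adjacencies are admitted in both directions, so the query-dependent cells $a2^b$ could in principle enter the cone through the crossed slots. The argument must therefore confirm that no cell of the cone carries any conflict. It does not: the crossed conflicts sit at rank $L+1$, strictly in $z$'s future. Conflict neighbours of future cells are reachable from $z$ but never lie upstream of it. I would state this explicitly, since it is exactly what makes the generous variant (I)(iii) harmless.
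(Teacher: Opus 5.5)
Your proposal is correct and takes essentially the same route as the paper. You compute the cone from $\pred(z^1)=\{b0\}$, $\pred(b0)=\varnothing$ and the fact that the only conflicts sit at rank $L+1$, so $\{b0,z^1\}$ is closed under in-neighbours; you then check the clause (I) inputs on those two cells item by item (markers, own cavity bit via $\Cav_B$ being all of $B$, empty foreign multiset), with the closure-plus-induction step and the remark on the crossed conflicts spelling out what the paper states in one line.
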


\begin{proof}
\emph{The cone.} At radius $0$ it is $\{z^1\}$ by definition. The only increasing-time predecessor of $z^1$ is
$b0$, which has none. The only non-trivial slot values are $X_0,X_1$, carried by $a2^0,a2^1,b2^0,b2^1$ at rank
$L+1$; neither $z^1$ (rank $1$) nor $b0$ (rank $0$) is conflict-adjacent to anything. So the cone reaches
$\{b0,z^1\}$ at radius $1$ and closes there, independently of the radius thereafter.

\emph{The labels.} For $b0$: $\texttt{isStart}=1$, $\texttt{isGoal}=0$, $c(b0)=1$ since $B$'s endpoints are the
same in both queries, and the slot-local foreign multiset is empty. For $z^1$: both markers $0$, $c(z^1)=1$
since $\Cav_B$ is all of $B$, and again an empty foreign multiset. None depends on $b$: the query difference is
confined to the cells $\{a2^0,a2^1,b2^0,b2^1,g^0,g^1\}$ of rank $\ge L+1$, which do not meet the cone.
\end{proof}

\subsection{The cavity-robust impossibility}
\label{sec:thmA}

\begin{theorem}[Cavity-robust direction impossibility]
\label{thm:A}
There is a two-agent, twelve-cell, depth-three, constant-degree vertex-conflict EA topology---namely
$G^\star$---and two \emph{feasible} queries $\theta_0,\theta_1$ on it such that the following holds. Even when
every agent's exact individual cavity is supplied as a per-cell runtime input, and even in the generous variant
in which each cell additionally receives the cavity bits of its conflict neighbours, \emph{no} forward-only
machine in $\CEA(0)$ is correct on both queries. The number of rounds $R$ is unrestricted. Dually, on the
time-reversed topology $\widehat{G^\star}$ of Section~\ref{sec:dual}, no backward-only machine in $\CEA(0)$ is
correct on both queries.
\end{theorem}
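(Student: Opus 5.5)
The proof is a pair-invariance argument that combines Lemma~\ref{lem:A1}, Lemma~\ref{lem:A2} and Lemma~\ref{lem:cone}(b) on $G^\star=G^\star(1)$.

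\emph{Forward-only half.} Fix a forward-only $M\in\CEA(0)$ and its query-independent output round $R$ from clause (T). With $k=0$ the flag transcript is empty, so Lemma~\ref{lem:cone}(b) says the output bit of the target $z=z^1$ is a function of the initial labelled views on $\Cone^{\rightarrow}_R(z)$ alone. Lemma~\ref{lem:A2} gives $\Cone^{\rightarrow}_R(z)\subseteq\{b0,z^1\}$ for every $R$. It also shows that all clause-(I) inputs on these two cells are identical under $\theta_0$ and $\theta_1$, including the generous foreign cavity bits of (I)(iii). Their static data is query-independent by (O). Hence $M$ emits the same bit at $z^1$ under both queries.

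Both queries are feasible by Lemma~\ref{lem:A1}, so clause (Y) requires a valid output on each. That lemma also says the valid joint output is unique under either convention, with $Y(z^1)=b$. A single bit cannot equal both $0$ and $1$, so $M$ errs on one query. Because $R$ was arbitrary, the conclusion holds for every round count.

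\emph{Backward-only half.} I would define $\widehat{G^\star}$ by reversing every DAG edge, setting $\hat\tau=3-\tau$, keeping the slot map, and swapping the roles of start and goal pins. In $\widehat{G^\star}$, agent $A$ has the query-dependent start $g^b$, and agent $B$ runs from $h$ to $b0$. Path-support is symmetric under reversal, since \PRED\ and \SUCC\ exchange. Cavities are also invariant, since forward and backward reachability exchange. So Lemma~\ref{lem:A1} transfers verbatim: the unique valid output is the reversed set, and the target again satisfies $Y(z^1)=b$.

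The descendant cone $\Cone^{\leftarrow}_t(z^1)$ in $\widehat{G^\star}$ equals the ancestor cone of $z^1$ in $G^\star$, namely $\{z^1,b0\}$. Its labels are query-independent, with the roles of \texttt{isStart} and \texttt{isGoal} exchanged on $b0$. Lemma~\ref{lem:cone}(c) then gives the same contradiction.

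\emph{Where care is needed.} The argument itself is short; the delicate point is checking that conditioning in Lemma~\ref{lem:cone} is legitimate when $k=0$, i.e.\ that there is no hidden channel. Three possible channels need ruling out: a query-dependent halting time, which (T) excludes; central reinterpretation of the output bits, which (Y) excludes; and a dead read of a decreasing-time port, which (D) neutralises by defining direction through actual dependence. I would also confirm that the conflict edges, which are time-neutral and therefore allowed into the cone, never attach to $b0$ or $z^1$. This holds because the only shared slots sit at rank~2, and it is exactly what Lemma~\ref{lem:A2} records.
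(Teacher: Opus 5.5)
Your proposal is correct and follows the paper's proof essentially step for step. It uses Lemma~\ref{lem:cone}(b) with the empty transcript, Lemma~\ref{lem:A2} for query-invariance of the cone's labelled views, Lemma~\ref{lem:A1} for feasibility and the forced target bit $Y(z^1)=b$, and time reversal for the backward half. The only difference is that you re-derive the reversal duality inline (cavities and path-supports preserved, descendant cone $\{z^1,b0\}$ with exchanged markers), whereas the paper simply cites Lemma~\ref{lem:reversal}.
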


\begin{proof}
Suppose a forward-only $M\in\CEA(0)$ is correct on $\theta_0$ and $\theta_1$. With $k=0$ the transcript is the
empty word, so by Lemma~\ref{lem:cone}(b) the round-$R$ output of the target $z^1$ is a function of the initial labelled
views of $\Cone^{\rightarrow}_R(z^1)$. By Lemma~\ref{lem:A2} that cone is contained in $\{b0,z^1\}$ at every radius---%
equal to it at every positive radius, and $\{z^1\}$ at radius $0$---and its labelled view is identical under
the two queries. Therefore $M$ outputs the same bit at $z^1$ under $\theta_0$
and $\theta_1$. But by Lemma~\ref{lem:A1} both queries are feasible, each has a unique valid joint output, and
that output forces $Y(z^1)=b$ under either convention; and by (Y) the cell bits are the answer, with no central
decoder permitted to reinterpret them. So $M$ errs on at least one query. The parameter $R$ occurs nowhere in
the argument, so allowing unboundedly many rounds changes nothing. The backward statement follows from
Lemma~\ref{lem:reversal}.
\end{proof}

Three features are worth isolating.

\begin{enumerate}[leftmargin=1.6em,itemsep=2pt]
\item It is an \emph{impossibility}, not a separation: not a lower bound on $R$ but the non-existence of a
correct machine---inside a class that, by Proposition~\ref{prop:nonvacuous}, does contain machines solving the
task.
\item It is \emph{robust to endpoint caching}. By Corollary~\ref{cor:t1-vacuous} the single-agent theorem
evaporates once cavities are supplied; Theorem~\ref{thm:A} does not.
\item It is \emph{essentially multi-agent}. By Corollary~\ref{cor:single-trivial} the corresponding
single-agent statement is false. Two agents are needed because a conflict coupling is needed, and the layered
placement used here---a free target, then a conflict layer, then a query-dependent goal layer---needs three
layers. We do not claim that depth three is globally minimal over all constructions.
\end{enumerate}

\begin{remark}[Quantification]
\label{rem:quant}
The theorem is of \emph{existential-topology, universal-query} form: there exists a topology such that no
forward-only machine is correct on all of its feasible queries. It does \emph{not} say that forward-only
machines fail on every multi-agent family; see Section~\ref{sec:nonclaims}, item~\ref{nc:notall}.
\end{remark}

\subsection{Why no subdivision is needed}
\label{sec:nosubdiv}

A lower bound of the classical shape---either the machine runs long enough for the changed marker to reach the
target, or it pays global bandwidth---needs the changed markers to be \emph{far} from the target in the
undirected local graph, and buys that distance by subdividing edges. Here that is unnecessary, and the reason
is the mathematical content of the construction.

For a forward-only machine the relevant object is not the undirected ball of Lemma~\ref{lem:cone}(a) but the directed cone
of Lemma~\ref{lem:cone}(b). In $G^\star$ the query difference lies in the target's \emph{strict future}, so no forward
directed path reaches the target at any radius: the relevant directed distance is $\infty$, not merely large.
The distance term therefore disappears and only the direct-sum term of Section~\ref{sec:bandwidth} survives, at
constant depth.

The contrast is sharp in the other direction too. In $G^\star(L)$ the \emph{undirected} distance from the
query-dependent cells to the target is exactly $L$, since $b2^1$ is a successor of $z^1$; at $L=1$ it is $1$. A
distance-based bound of the form $R\ge\min\{D,\lceil p/k\rceil\}$---Proposition~\ref{prop:thm5}
below---instantiated on this family therefore yields $\min\{1,\cdot\}=1$, a trivial statement. Both facts---the
invariance of the directed cone and the growth of the undirected distance with $L$---are confirmed by
enumeration for $L=1,\dots,6$ (Section~\ref{sec:status}).

\subsection{The backward dual and the composite family}
\label{sec:dual}

\begin{definition}[Time reversal]
\label{def:reversal}
$\widehat{G^\star}(L)$ is obtained from $G^\star(L)$ by reversing every edge, replacing $\tau$ by
$(L+2)-\tau$, and exchanging each agent's start and goal. The slot map is unchanged, conflicts being
time-neutral.
\end{definition}

Explicitly, for $L=1$: agent $\widehat A$ runs $g^0\to a2^0\to a1\to a0$ and $g^1\to a2^1\to a1\to a0$ with
start $g^b$---the query-dependent datum---and goal $a0$; agent $\widehat B$ runs $h\to b2^r\to z^r\to b0$ with
start $h$ and goal $b0$; the crossed conflicts sit at rank~1 and the target $\widehat z:=z^1$ at rank~2, so the
query difference lies in the target's strict \emph{past}.

\begin{lemma}[Reversal duality]
\label{lem:reversal}
Time reversal (Definition~\ref{def:reversal}) maps ranked DAGs to ranked DAGs, preserves path-supports and slot
exclusion, exchanges feasible queries with feasible queries, and exchanges forward-only with backward-only
machines. Consequently Lemmas~\ref{lem:A1} and~\ref{lem:A2} hold verbatim on $\widehat{G^\star}(L)$ with the
descendant cone $\{z^1,b0\}$ in place of the ancestor cone, and Theorem~\ref{thm:A} holds for backward-only
machines.
\end{lemma}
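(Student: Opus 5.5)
I would prove Lemma~\ref{lem:reversal} by treating time reversal as an involutive relabelling $\rho$ of the instance and checking that every notion in Definitions~\ref{def:instance}--\ref{def:class} is either invariant under $\rho$ or swapped with its dual. The verification splits into a static part, on graphs and masks, and a dynamic part, on machines. Once both are in place, the consequences for Lemmas~\ref{lem:A1}, \ref{lem:A2} and Theorem~\ref{thm:A} follow by transport.

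\emph{Static part.} Let $\hat\tau=(L+2)-\tau$. An edge $(u,v)$ with $\tau(v)=\tau(u)+1$ becomes $(v,u)$ with $\hat\tau(u)=\hat\tau(v)+1$, so the result is a ranked DAG. The slot map is unchanged and still respects time, since $\hat\tau$ is a bijection on layers; hence conflict adjacencies and slot exclusion are literally the same. Reversal exchanges $\pred$ with $\suc$ and swaps each agent's $s_i$ with $g_i$. Consequently conditions \PRED\ and \SUCC\ of Definition~\ref{def:validity} are interchanged, and a set $S$ is an $s$--$g$ path-support in $D$ iff it is a $g$--$s$ path-support in $\hat D$. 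Thin trajectories reverse to thin trajectories. The cavity is likewise invariant: $\{q:s\leadsto q\}\cap\{q:q\leadsto g\}$ is mapped to the same set. So valid joint outputs of $\theta$ on $G^\star(L)$ are exactly those of $\hat\theta$ on $\widehat{G^\star}(L)$, and feasibility is preserved.

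\emph{Per-cell inputs.} The initial labelled view of clause (I) changes only by exchanging \texttt{isStart} with \texttt{isGoal}. The cavity bit and the generous foreign multiset are unchanged.

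With these facts, Lemma~\ref{lem:A1} transfers verbatim: the unique outputs $S_A,S_B$ are the same sets, and $Y(z^1)=b$.

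\emph{Dynamic part.} Reversal relabels each increasing-time port of a DAG edge as a decreasing-time port and vice versa, while conflict ports and flags are untouched. Given a machine $M$ on $\widehat{G^\star}(L)$, define $\check M$ on $G^\star(L)$ by the same transition code under this port relabelling and the \texttt{isStart}/\texttt{isGoal} swap. Clause (D) is phrased through actual dependence in the round-unrolled graph, so this relabelling turns past-directed edges into future-directed ones. Hence $M$ is backward-only iff $\check M$ is forward-only. The runs on corresponding queries coincide cell by cell, so $\check M$ is correct exactly when $M$ is.

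For Lemma~\ref{lem:A2} I would rather verify the descendant cone directly, which is cheap. In $\widehat{G^\star}(L)$ the only time-increasing successor of $z^1$ is $b0$, which is a sink. Neither cell is conflict-adjacent to anything. Their labels (now with $b0$ as goal) are $b$-independent, because the query difference sits at $\{g^0,g^1,a2^r,b2^r\}$, outside the cone.

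\emph{Conclusion and main obstacle.} Theorem~\ref{thm:A}'s proof then reruns with Lemma~\ref{lem:cone}(c) in place of (b). Equivalently, a correct backward-only $M$ would yield a correct forward-only $\check M$ on $G^\star$, which contradicts the forward case already proved. The step I expect to be most delicate is the machine correspondence: one must make sure that clauses (R), (T), (G) and (Y) are untouched by the relabelling. In particular, the associative-commutative aggregation must be insensitive to which ports exist, and the output round must stay query-independent. Both hold because only port names and marker names change. Using the direct cone argument sidesteps most of this.
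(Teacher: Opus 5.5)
Your proposal is correct and follows the paper's own route: you check that $\hat\tau=(L+2)-\tau$ gives a ranked DAG, that path-supports for $(s,g)$ become path-supports for $(g,s)$ with slot exclusion untouched, and that exchanging increasing- and decreasing-time ports turns forward-only into backward-only under clause (D), so part (b) of Lemma~\ref{lem:cone} becomes part (c). What you add---the invariance of cavity bits and foreign multisets with only \texttt{isStart}/\texttt{isGoal} swapped, the explicit transported machine $\check M$, and the direct computation of the descendant cone $\{z^1,b0\}$---fills in steps the paper's terse proof leaves implicit, without changing the argument.
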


\begin{proof}
Reversing all edges and replacing $\tau$ by $(L+2)-\tau$ again satisfies $(u,v)\in E\Rightarrow
\tau(v)=\tau(u)+1$. The reverse of an $s\to g$ path is a $g\to s$ path, so path-supports for $(s,g)$ are
exactly path-supports for $(g,s)$ in the reversed DAG; slot exclusion is a statement about simultaneous
occupancy and is invariant. Reversal exchanges increasing-time with decreasing-time ports, hence forward-only
with backward-only machines by clause (D), and correspondingly parts (b) and (c) of Lemma~\ref{lem:cone}.
\end{proof}

\begin{corollary}[One topology excluding both orientations]
\label{cor:composite}
Let $\mathcal{H}_p$ consist of $p$ slot-disjoint copies of $G^\star$ together with $p$ slot-disjoint copies of
$\widehat{G^\star}$. In $\CEA(0)$, no forward-only machine is correct on all feasible queries of
$\mathcal{H}_p$, and no backward-only machine is. Hence a universally correct cellular architecture for
$\mathcal{H}_p$ must admit causally effective dependence in \emph{both} temporal orientations.
\end{corollary}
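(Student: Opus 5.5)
The plan is to reduce both claims to Theorem~\ref{thm:A} and Lemma~\ref{lem:reversal}, applied to a single copy embedded in $\mathcal{H}_p$. The only new point is that the other $2p-1$ copies neither help nor hurt the target cell. Fix one copy $C$ of $G^\star$ inside $\mathcal{H}_p$, with its target $z^1_C$. On every other copy, pin one fixed feasible query, say $\theta_0$ on each $G^\star$ copy and $\widehat\theta_0$ on each $\widehat{G^\star}$ copy. Then form two queries $\Theta_0,\Theta_1$ of $\mathcal{H}_p$ that agree everywhere except that $C$ carries $\theta_0$ or $\theta_1$.

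First I will check feasibility and forcing. The copies are slot-disjoint, so no conflict adjacency crosses copies, and the validity constraint of Definition~\ref{def:validity} splits as a Cartesian product over copies, by the same argument as in the proof of Proposition~\ref{prop:nonvacuous}. Each factor is feasible with a unique valid output by Lemma~\ref{lem:A1}, and for the reversed copies by Lemma~\ref{lem:reversal}. Hence $\Theta_b$ is feasible, and every valid joint output sets $Y(z^1_C)=b$, under either convention.

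Next comes cone invariance in the composite. Since the copies are disjoint and slot-disjoint, the local fabric of $\mathcal{H}_p$ is the disjoint union of the copies' fabrics. So $\Cone^{\rightarrow}_t(z^1_C)$ computed in $\mathcal{H}_p$ equals the cone computed in $C$, which is contained in $\{b0_C,z^1_C\}$ by Lemma~\ref{lem:A2}. Clause (I) labels are per-cell and copy-local: cavities are computed inside one agent's DAG, and foreign cavity bits come only from conflict neighbours, all within $C$. These labels are therefore identical under $\Theta_0$ and $\Theta_1$. With $k=0$, Lemma~\ref{lem:cone}(b) makes the output at $z^1_C$ of any forward-only machine in $\CEA(0)$ independent of $b$, so the machine errs on $\Theta_0$ or $\Theta_1$. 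The backward-only claim is the mirror argument: take a copy $\widehat C$ of $\widehat{G^\star}$, vary only its start $g^b$, and use Lemma~\ref{lem:cone}(c) together with the reversed form of Lemma~\ref{lem:A2} from Lemma~\ref{lem:reversal}.

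The final sentence of the corollary follows by contraposition against clause (D). A machine in which no past-directed edge occurs in the unrolled dependency graph is forward-only, and one with no future-directed edge is backward-only, so a machine correct on all feasible queries of $\mathcal{H}_p$ must contain causally effective dependence of both kinds. Existence of such a correct machine is supplied by Proposition~\ref{prop:nonvacuous}, but it is not needed for the impossibility. The only step needing care is the claim that the embedding preserves the cone and the labels. One must note that clause (O) lets the transition rule depend on the whole topology, including other copies. That does not matter, because Lemma~\ref{lem:cone} conditions only on runtime labels, and static data are query-independent.
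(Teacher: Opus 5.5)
Your proposal is correct and follows the paper's route: the paper's proof simply invokes Theorem~\ref{thm:A} on one $G^\star$ copy and Lemma~\ref{lem:reversal} on one $\widehat{G^\star}$ copy. Your explicit embedding---fixing the other copies' queries, splitting validity over slot-disjoint copies, and noting that cones and labels are copy-local---supplies exactly the detail that the one-line proof leaves implicit, and it matches the sub-family argument the paper gives for Corollary~\ref{cor:separation}.
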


\begin{proof}
A forward-only machine already fails on any one $G^\star$ copy by Theorem~\ref{thm:A}, and a backward-only one
on any one $\widehat{G^\star}$ copy by Lemma~\ref{lem:reversal}.
\end{proof}

The corollary is at the level of the architecture and of the query family. It does not assert that on a single
query one causal path traverses both orientations (Section~\ref{sec:nonclaims}, item~\ref{nc:samequery}).

\section{Bandwidth: what a narrow global channel buys}
\label{sec:bandwidth}

Theorem~\ref{thm:A} concerns $k=0$. Once $k\ge1$ the impossibility becomes a price; this section computes it
and shows the computation is tight in order.

\subsection{Warm-up: the direction-unrestricted bound}
\label{sec:unrestricted}

\begin{proposition}[Direction-unrestricted bound]
\label{prop:thm5}
Let $k\ge1$. Take $p$ slot-disjoint selector copies and subdivide temporal edges so that the undirected
distance in the local fabric from the query-dependent cells to each target is at least $D$. Then any machine in
$\CEA(k)$---with no restriction on direction---correct on the $2^p$ queries within $R$ rounds satisfies
$R\ge\min\{D,\lceil p/k\rceil\}$.
\end{proposition}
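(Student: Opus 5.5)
The plan is to argue by contradiction: assume $R<D$ and $R<\lceil p/k\rceil$ and produce a transcript collision. Index the $2^p$ queries by $\beta=(b_1,\dots,b_p)\in\{0,1\}^p$, copy $j$ receiving $\theta_{b_j}$ on its subdivided selector. By Lemma~\ref{lem:A1}, applied copy by copy (the copies are slot-disjoint, so by the component decomposition in the proof of Proposition~\ref{prop:nonvacuous} validity factors over them), every such query is feasible and its valid joint output is unique. In particular the target $z_j$ of copy $j$ must output exactly $b_j$, so any correct machine outputs the whole vector $\beta$ at the targets $(z_1,\dots,z_p)$.

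Next I condition on the flag transcript. Suppose $R<D$. Then for every $j$ the radius-$R$ undirected ball of $z_j$ in the local fabric contains no query-dependent cell of any copy. This is by the distance hypothesis for copy $j$ itself; the other copies lie in other components, so they are at infinite distance. Hence the initial labelled views on that ball are identical for all $\beta$, since the cavity bits and the generous foreign bits there are query-independent, exactly as in Lemma~\ref{lem:A2}. Now fix a transcript $\gamma\in\{0,1\}^{kR}$. By Lemma~\ref{lem:cone}(a), the output of $z_j$ is a function of $\gamma$ and of these fixed views only. So the output vector at the targets is a function $F(\gamma)$ of the transcript alone.

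Correctness forces $F(\Gamma_R(\beta))=\beta$ for all $\beta$, so $\beta\mapsto\Gamma_R(\beta)$ is injective. This gives $2^{kR}\ge 2^p$, that is $kR\ge p$, hence $R\ge\lceil p/k\rceil$. That contradicts $R<\lceil p/k\rceil$, and so $R\ge\min\{D,\lceil p/k\rceil\}$.

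The delicate step is the use of Lemma~\ref{lem:cone} with a transcript that is itself query-dependent. The lemma is stated for a \emph{fixed} $\gamma$, so I must make sure the counting is legitimate. For each $\beta$ the actual run produces $\Gamma_R(\beta)$, and the lemma applied with $\gamma=\Gamma_R(\beta)$ describes that run's target outputs as $F(\Gamma_R(\beta))$ with $F$ independent of $\beta$. This holds because every cell's contribution consistent with the conditioned transcript is determined by local views, and the output is determined once $\gamma$ is fixed. I would also check that the subdivision still leaves Lemma~\ref{lem:A1} valid with the forced bit at $z_j$ equal to $b_j$; this is given for $G^\star(L)$. The remaining point is that ``distance $\ge D$'' makes the radius-$(D-1)$ ball free of query-dependent labels, including the generous conflict-neighbour bits. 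Those bits sit one hop from a conflict cell, so I would choose the distance to count such cells as query-dependent.
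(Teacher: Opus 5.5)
Your proof is correct and follows the paper's argument. Assume $R<D$, use Lemma~\ref{lem:cone}(a) to make the target output vector a function of the $kR$-bit transcript alone (the radius-$R$ balls carry query-invariant labels), and count $2^p\le 2^{kR}$ against the forced outputs of Lemma~\ref{lem:A1}. Your extra care about the query-dependent transcript, and about treating the generous-input cells $b2^r$ as query-dependent (as the proof of Lemma~\ref{lem:A2} already does), only spells out what the paper's short proof leaves implicit.
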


\begin{proof}
If $R\ge D$ there is nothing to prove. If $R<D$ then, by Lemma~\ref{lem:cone}(a), the round-$R$ output of every target is a
function of $\gamma$ together with the initial labelled views of its undirected radius-$R$ ball; those views
are identical across the family because the query-dependent cells are at distance $\ge D>R$. So the output
vector is a function of $\Gamma_R$ alone. Correctness forces $2^p$ distinct output vectors, hence
$2^p\le2^{kR}$.
\end{proof}

\begin{remark}[Where the target cell must sit]
\label{rem:zplacement}
In a selector construction of this kind it matters, and is easy to leave ambiguous, \emph{which} agent carries
the target cell $z_j$. If $z_j$ is a branch cell of the \emph{selector} agent $A_j$---the agent whose goal the
query bit pins---then supplying cavities destroys the bound: $z_j$'s own cavity bit already equals $b_j$, so the
own-cavity-bit machine of Corollary~\ref{cor:single-trivial} answers in zero rounds. The bound survives cavity
input only when $z_j$ is a branch cell of the \emph{responder} agent $B_j$, whose endpoints are the same in
every query. We therefore fix the responder placement throughout, and Lemma~\ref{lem:A2} is the proof that the
placement does what it should. Both readings have been checked by enumeration (Section~\ref{sec:status}).
\end{remark}

\subsection{The direct sum}
\label{sec:directsum}

Let $\mathcal{G}_p$ consist of $p$ slot-disjoint copies of $G^\star$: $2p$ agents, $N=\Theta(p)$ cells,
constant degree and constant depth. The query family is $\Theta=\{0,1\}^p$, copy $j$ carrying
$g_{A_j}=g^{b_j}$; the target set is $Z=\{z_1,\dots,z_p\}$ with $z_j:=z^1_j$.

\begin{lemma}[Direct-sum forcing]
\label{lem:A1sum}
Every $b\in\{0,1\}^p$ is feasible on $\mathcal{G}_p$, the valid joint output is unique, and the target vector
is $(Y(z_1),\dots,Y(z_p))=b$; so the forced output vector takes $2^p$ distinct values. Moreover each $z_j$ has
forward ancestor cone contained in $\{b0_j,z^1_j\}$ at every radius, and equal to it at every positive radius,
with labels constant across the family.
\end{lemma}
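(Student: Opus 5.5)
The plan is to reduce everything to the single-copy statements, Lemma~\ref{lem:A1} and Lemma~\ref{lem:A2}, by exploiting that the copies in $\mathcal{G}_p$ are slot-disjoint. The first step is to record the structural decomposition. The local fabric of $\mathcal{G}_p$ is the disjoint union of the $p$ fabrics of the copies, since DAG edges stay inside an agent and the slot-disjointness hypothesis means no conflict adjacency joins cells of different copies. The cavity of agent $A_j$ or $B_j$ under $b$ is computed inside its own DAG. It therefore coincides with the cavity of the corresponding agent of a single $G^\star$ under $\theta_{b_j}$.

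For forcing, a joint family $(S_{A_1},S_{B_1},\dots,S_{A_p},S_{B_p})$ is valid for $b$ exactly when each pair $(S_{A_j},S_{B_j})$ is valid for $\theta_{b_j}$ on copy $j$. The path-support conditions are per agent. Slot exclusion only couples conflict-adjacent cells, and these lie in a common copy. This is the same Cartesian-product decomposition already used in the proof of Proposition~\ref{prop:nonvacuous}. Lemma~\ref{lem:A1} (with $L=1$) gives, for each $j$, a unique valid pair, with $Y(z^1_j)=b_j$. Hence the product set of valid joint outputs is nonempty, so $b$ is feasible, and it is a singleton, so the output is unique. The target vector equals $b$, and as $b$ ranges over $\{0,1\}^p$ this gives $2^p$ distinct forced vectors.

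For the cone claim, $\Cone^{\rightarrow}_t(z_j)$ is generated by increasing-time predecessors and conflict adjacencies only. Both relations stay inside copy $j$, so the cone in $\mathcal{G}_p$ equals the cone computed in the $j$-th copy of $G^\star$. Lemma~\ref{lem:A2} then gives $\{z^1_j\}$ at radius $0$ and $\{b0_j,z^1_j\}$ at every positive radius. For label constancy, the labels of $b0_j$ and $z^1_j$ (markers, own cavity bit, and the empty foreign multiset) depend only on the endpoints of $B_j$. These endpoints are $(b0_j,h_j)$ for every $b$, so the labels do not depend on $b_j$. They also do not depend on the other coordinates $b_{j'}$, because those only change data in other copies.

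The only step needing care is the product decomposition of validity, in particular that no slot is shared across copies. That is exactly what the slot-disjoint hypothesis supplies, so I expect no real obstacle. The proof is bookkeeping on top of Lemmas~\ref{lem:A1} and~\ref{lem:A2}.
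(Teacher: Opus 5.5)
Your proposal is correct and follows the paper's own proof. The paper likewise notes that slot-disjointness makes joint validity the conjunction of per-copy validities, applies Lemma~\ref{lem:A1} to each copy for forcing, and applies Lemma~\ref{lem:A2} to each copy for the cones, since distinct copies share neither edges nor slots. Your version adds the bookkeeping explicitly, including the check that labels at $b0_j,z^1_j$ do not depend on the other coordinates $b_{j'}$.
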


\begin{proof}
Slots are disjoint across copies, so slot exclusion decomposes and joint validity is the conjunction of the
per-copy validities; apply Lemma~\ref{lem:A1} to each copy. For the cones apply Lemma~\ref{lem:A2} to each
copy, noting that distinct copies share neither edges nor slots.
\end{proof}

We now isolate the quantity that measures how much of the query reaches the targets. It is a property of the
topology, the query family, the target set and the radius---\emph{not} a syntactic resource of the machine, and
not a string of advice handed to anybody---so it is defined here rather than in Definition~\ref{def:class}.

\begin{definition}[Cone-label complexity]
\label{def:JR}
Fix a topology, a query family $\Theta$, a target set $Z$, a radius $R$ and a direction
$d\in\{\rightarrow,\leftarrow\}$. For $\theta\in\Theta$ let
\[
\Lab^{d}_{R,Z}(\theta)\ :=\ \bigl(\mathrm{lab}_\theta(y)\bigr)_{\,y\in\bigcup_{z\in Z}\Cone^{d}_R(z)}
\]
be the tuple of initial labelled views over the radius-$R$ cones of the targets, and put
\[
J^{d}_R(\Theta,Z)\ :=\ \bigl\lceil\log_2\bigl|\{\Lab^{d}_{R,Z}(\theta):\theta\in\Theta\}\bigr|\bigr\rceil .
\]
We drop $d$ when the direction is clear from the machine under discussion, and write $J_R$ for
$J^{d}_R(\Theta,Z)$. Since cones grow with $R$, $J_R$ is non-decreasing in $R$; a machine enters this
definition only through its round bound $R$ and its direction restriction.
\end{definition}

\begin{lemma}[Directed transcript counting]
\label{lem:F1prime}
Fix a forward-only machine in $\CEA(k)$ halting at round $R$, a query family $\Theta$ and a target set $Z$.
Then the round-$R$ output vector $Y_Z(\theta)$ takes at most $2^{\,J^{\rightarrow}_R(\Theta,Z)+kR}$ distinct
values. The dual statement holds for backward-only machines with $J^{\leftarrow}_R$.
\end{lemma}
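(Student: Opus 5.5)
The plan is to show that the output vector factors through a pair consisting of the cone labels and the flag transcript, and then to count such pairs. By Lemma~\ref{lem:cone}(b), for each fixed transcript $\gamma\in\{0,1\}^{kR}$ the round-$R$ output of each $z\in Z$ is a function $f_{z,\gamma}$ of the initial labelled views on $\Cone^{\rightarrow}_R(z)$. These views are a subtuple of $\Lab^{\rightarrow}_{R,Z}(\theta)$. So there is a single function $F$ with $Y_Z(\theta)=F\bigl(\Lab^{\rightarrow}_{R,Z}(\theta),\Gamma_R(\theta)\bigr)$, where $\Gamma_R(\theta)$ is the actual transcript produced under $\theta$.

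The step that needs care is the conditioning in Lemma~\ref{lem:cone}. That lemma fixes the transcript $\gamma$ externally, but the machine's actual transcript depends on the whole query, not only on the cones. I would handle this by evaluating the conditioned function at the realized transcript. Put $\gamma=\Gamma_R(\theta)$. The induction in Lemma~\ref{lem:cone} only ever replaces the aggregate $G_t$ by the given value $g_t$, so when $g_t$ equals the true aggregate the conditioned computation coincides with the real run. Hence $Y_z(\theta)=f_{z,\Gamma_R(\theta)}(\text{cone labels of }z)$ holds exactly. It does not matter that $\Gamma_R(\theta)$ depends on labels outside the cones, because we never need to compute it from the cones. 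We only use it as an index.

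The counting is then immediate. The map $\theta\mapsto(\Lab^{\rightarrow}_{R,Z}(\theta),\Gamma_R(\theta))$ has image inside a product set. The first factor has size $|\{\Lab^{\rightarrow}_{R,Z}(\theta):\theta\in\Theta\}|\le 2^{J^{\rightarrow}_R}$, by the ceiling in Definition~\ref{def:JR}. The second factor is $\{0,1\}^{kR}$, by clause (G) with its fixed convention of exactly $kR$ bits. The output vector is a function of this pair, so it takes at most $2^{J^{\rightarrow}_R+kR}$ values.

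The backward-only statement follows the same argument with Lemma~\ref{lem:cone}(c) and descendant cones in place of ancestor cones. Alternatively it follows by time reversal via Lemma~\ref{lem:reversal}. I expect the only real obstacle to be the realized-versus-conditioned transcript point, and the argument above disposes of it.
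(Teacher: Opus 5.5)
Your proposal is correct and follows the paper's proof: you use Lemma~\ref{lem:cone}(b) to factor $Y_Z(\theta)$ through the pair $(\Lab^{\rightarrow}_{R,Z}(\theta),\Gamma_R(\theta))$, then bound the number of such pairs by $2^{J^{\rightarrow}_R}\cdot 2^{kR}$ using the ceiling in Definition~\ref{def:JR} and convention (G). Your remark about evaluating the conditioned function at the realized transcript $\Gamma_R(\theta)$ only makes explicit what the paper leaves implicit in ``conditioning on $\gamma$ fixes its value'' and ``$\theta$ determines that pair.''
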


\begin{proof}
By Lemma~\ref{lem:cone}(b), $Y_Z(\theta)$ is a function of the pair $(\Lab^{\rightarrow}_{R,Z}(\theta),\Gamma_R(\theta))$.
The machine is deterministic, so $\theta$ determines that pair; the number of pairs is at most
$2^{J^{\rightarrow}_R}\cdot 2^{kR}$, the transcript having exactly $kR$ bits by convention (G).
\end{proof}

Only the second factor is inherited from the standard method; the first is the contribution of the directed
cone.

\begin{remark}[If one really wants an advice model]
\label{rem:advice}
$J_R$ measures information that the query family \emph{happens} to place inside the targets' cones. If one
additionally hands the machine an external advice string of $a$ bits, chosen as a function of the query and
readable by every cell, the same counting gives $a+J_R+kR\ge\log_2M$, where $M$ is the number of forced output
vectors. We do not pursue that model; we record it because it is the natural home of the objection ``why not
precompute the answer'', whose price is then visible as $a\ge p$.
\end{remark}

\subsection{The bandwidth trade-off}
\label{sec:thmBprime}

\begin{theorem}[Cavity-robust direction--bandwidth trade-off]
\label{thm:Bprime}
For every $p\ge1$ and $k\ge0$ there is a $2p$-agent, \emph{constant-depth}, constant-degree vertex-conflict EA
topology with $N=\Theta(p)$ cells and a family $\Theta$ of $2^p$ feasible queries---namely
$\mathcal{G}_p$---such that any deterministic, synchronous, endpoint-uniform \emph{forward-only} machine in
$\CEA(k)$ producing a correct joint valid-support output on all queries at a common round $R$ satisfies
\[
J^{\rightarrow}_R(\Theta,Z)\ +\ kR\ \ge\ p .
\]
On this family $J^{\rightarrow}_R(\Theta,Z)=0$ for \emph{every} $R$, so the bound reads $kR\ge p$: for $k\ge1$,
$R\ge\lceil p/k\rceil$, and for $k=0$ no forward-only machine exists at any finite $R$. The dual statement
holds for backward-only machines on the time-reversed family.
\end{theorem}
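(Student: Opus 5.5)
The plan is to combine the direct-sum forcing of Lemma~\ref{lem:A1sum} with the directed transcript counting of Lemma~\ref{lem:F1prime}, and then to evaluate $J^{\rightarrow}_R$ on $\mathcal{G}_p$ using the cone invariance of Lemma~\ref{lem:A2}.

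\emph{Step 1 (the topology and its parameters).} Take $\mathcal{G}_p$ as in Section~\ref{sec:directsum}, with $p$ slot-disjoint copies of $G^\star$, target set $Z=\{z^1_1,\dots,z^1_p\}$ and query family $\Theta=\{0,1\}^p$. The parameters are read off directly: there are $2p$ agents and $N=12p$ cells, the depth is three (ranks $0$ to $3$ in every copy), and the degree is bounded by the constant maximum degree of the twelve-cell gadget, since copies share neither edges nor slots.

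\emph{Step 2 (lower bound on the number of forced outputs).} By Lemma~\ref{lem:A1sum}, every $b\in\Theta$ is feasible and has a unique valid joint output, under either the thick or the thin convention. That output forces the target vector $Y_Z=b$. Clause (Y) forbids any decoder, so a correct machine must output exactly this vector. Hence $Y_Z(\theta)$ takes all $2^p$ values.

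\emph{Step 3 (upper bound via counting).} By Lemma~\ref{lem:F1prime}, a forward-only machine in $\CEA(k)$ halting at round $R$ has at most $2^{J^{\rightarrow}_R+kR}$ possible target vectors. Combining with Step 2 gives $2^p\le 2^{J^{\rightarrow}_R+kR}$, that is, $J^{\rightarrow}_R+kR\ge p$.

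\emph{Step 4 (evaluating the cone term).} By Lemma~\ref{lem:A1sum}, equivalently Lemma~\ref{lem:A2} applied copywise, the union of the target cones $\bigcup_j\Cone^{\rightarrow}_R(z_j)$ lies inside $\bigcup_j\{b0_j,z^1_j\}$ for every $R$. All labels there, including the generous foreign multisets of (I)(iii), are constant across $\Theta$. So the set $\{\Lab^{\rightarrow}_{R,Z}(\theta)\}$ is a singleton and $J^{\rightarrow}_R=\lceil\log_2 1\rceil=0$. The bound then reads $kR\ge p$, which gives $R\ge\lceil p/k\rceil$ for $k\ge1$ since $R$ is an integer. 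For $k=0$ it reads $0\ge p$, contradicting $p\ge1$, so no correct forward-only machine exists at any $R$.

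\emph{Step 5 (the dual).} Apply Lemma~\ref{lem:reversal} copywise to obtain $p$ copies of $\widehat{G^\star}$. Reversal exchanges forward-only and backward-only machines and swaps parts (b) and (c) of Lemma~\ref{lem:cone}. Rerun Steps 2--4 with the dual clause of Lemma~\ref{lem:F1prime} and $J^{\leftarrow}_R$, using the descendant cones $\{z^1_j,b0_j\}$.

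The main point needing care is Step 4's claim that the copies are genuinely non-interacting. I would check that conflict adjacencies within a copy occur only at rank $2$, and that slot-disjointness across copies adds no conflict edge into any target's cone. This guarantees that forward cones never enter another copy's query-dependent cells. Everything else is direct instantiation of the cited lemmas.
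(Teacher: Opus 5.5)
Your proposal is correct and follows the paper's own proof. Lemma~\ref{lem:A1sum} forces $2^p$ distinct target vectors, Lemma~\ref{lem:F1prime} caps them at $2^{J^{\rightarrow}_R+kR}$, copywise cone invariance (Lemma~\ref{lem:A2}) gives $J^{\rightarrow}_R=0$, and $k=0$ reduces to $0\ge p$; your Step~5 only spells out, via Lemma~\ref{lem:reversal}, the backward dual that the paper asserts without a separate argument.
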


\begin{proof}
By Lemma~\ref{lem:A1sum} correctness forces $2^p$ distinct output vectors, and the targets' radius-$R$ forward
cones carry labels constant across $\Theta$, so $|\{\Lab^{\rightarrow}_{R,Z}(\theta):\theta\in\Theta\}|=1$ and
$J^{\rightarrow}_R=0$ for every $R$. By Lemma~\ref{lem:F1prime} at most $2^{J^{\rightarrow}_R+kR}$ output
vectors are available. Take logarithms. If $k=0$ the inequality reads $0\ge p$, false for $p\ge1$, so no such
machine exists.
\end{proof}

No distance parameter appears, and the depth stays at three.

\subsection{Matching upper bounds}
\label{sec:upper}

\begin{proposition}[Streaming through the flags]
\label{prop:U1}
Let $k\ge1$. On the subclass of $\CEA(k)$ whose one-bit flags are programmable as OR aggregations, a machine
that uses \emph{no local channels at all}---hence is forward-only, and backward-only, in the sense of
(D)---solves $\mathcal{G}_p$ in $R=\lceil p/k\rceil+O(1)$ rounds.
\end{proposition}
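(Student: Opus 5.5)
We construct the machine explicitly, using the fact that by Lemma~\ref{lem:A1sum} the unique valid output on $\mathcal{G}_p$ is determined by the bit vector $b\in\{0,1\}^p$. Moreover, every bit of that output other than the branch choice of the responders is already available at the cell from its own cavity bit. Concretely, in copy $j$ agent $A_j$'s forced mask is exactly $\Cav_{A_j}$, so every $A_j$ cell outputs its own cavity bit in zero rounds (Corollary~\ref{cor:single-trivial}). The cells $b0_j$ and $h_j$ are ON unconditionally, which is compiled in via their static address. The only cells needing nonlocal information are the four branch cells $z^0_j,z^1_j,b2^0_j,b2^1_j$ of $B_j$, whose forced bits are $[r=b_j]$. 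Meanwhile $b_j$ is visible locally at a single cell: $c(a2^1_j)=b_j$, since $a2^1_j\in\Cav_{A_j}$ iff $g_{A_j}=g^1$.

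The schedule streams these $p$ bits through the $k$ OR flags in $\lceil p/k\rceil$ rounds. Using static addresses (clause (O)), we compile a fixed assignment: copy $j$ is sent on flag $f(j)=((j-1)\bmod k)+1$ in round $t(j)=\lceil j/k\rceil$. In round $t$, flag $f$ is the OR over all cells of a state-dependent contribution bit. Cell $a2^1_j$ contributes $c(a2^1_j)$ to flag $f(j)$ exactly when $t=t(j)$, and every other cell contributes $0$. The OR then equals $b_j$. For this to be legitimate, the contribution must be a function of the round-$(t-1)$ state. We therefore let each cell's state carry its initial labelled view together with a round counter, which requires finite but topology-dependent state, permitted by (S).

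Each responder branch cell of copy $j$ reads $G_{t(j)}$ in the aggregate/send phase of round $t(j)$ and latches the bit $b_j$ in its receive phase. At the common round $R=\lceil p/k\rceil$, or that plus a constant to absorb the phase convention, it outputs $[r=b_j]$. All other cells output as described above. Validity then follows directly from Lemma~\ref{lem:A1sum}, because the emitted mask coincides with the unique valid joint output.

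It remains to check that no message is ever sent or read on any local channel, so the unrolled dependency graph of clause (D) contains no temporal edges at all. The machine is therefore vacuously forward-only and backward-only, and flags are time-neutral. The main point of care is this bookkeeping: aggregation must read the previous round's state, the round index must be tracked in the state so that the schedule is query-independent (T), and the common output round must be fixed independently of $b$. These give the $+O(1)$. No other step is delicate. Note also that the construction is consistent with Theorem~\ref{thm:Bprime}, since $kR\ge p$ holds for this $R$.
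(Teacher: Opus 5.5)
Your proof is correct and follows essentially the paper's construction. It uses the same injective schedule $f(j)=((j-1)\bmod k)+1$, $t(j)=\lceil j/k\rceil$, the same single contributor $a2^1_j$ with $c(a2^1_j)=b_j$, the same latching by the responder branch cells, and the same uniqueness argument via Lemma~\ref{lem:A1} (equivalently Lemma~\ref{lem:A1sum}). Your explicit round counter in the cell state and your phase bookkeeping are details the paper leaves implicit (it simply fixes the output round at $\lceil p/k\rceil+1$), and you handle them correctly.
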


\begin{proof}
By (O) each cell knows its static address. Index copies by $j\in\{1,\dots,p\}$, rounds from $1$, and flags by
$f\in\{1,\dots,k\}$; assign copy $j$ the round $r(j)=\lceil j/k\rceil$ and the flag $f(j)=((j-1)\bmod k)+1$, so
that distinct copies sharing a round use distinct flags. At round $r(j)$ the cell $a2^1_j$ contributes its own
cavity bit $c(a2^1_j)=b_j$ to flag $f(j)$ and every other cell contributes $0$; under OR aggregation the flag
word carries $b_j$. \emph{Every} cell of copy $j$'s responder branches---that is $z^0_j,z^1_j$, the subdivision
cells $z^r_{j,i}$, and $b2^0_j,b2^1_j$---reads flag $f(j)$ at round $r(j)$ and latches $b_j$, which it can do
because it knows its own copy index from its static address. At the common round $R=\lceil p/k\rceil+1$ every
cell fixes its output: cells of $A_j$ output their own cavity bits; $b0_j$ and $h_j$ output ON; and each
latching cell of branch $r$ outputs ON exactly when the latched $b_j$ equals $r$. By Lemma~\ref{lem:A1} this is
the unique valid joint output.
\end{proof}

Hence Theorem~\ref{thm:Bprime} is tight in order: $\lceil p/k\rceil\le R\le\lceil p/k\rceil+O(1)$. As
Remark~\ref{rem:nonuniform} notes, this upper bound uses static addresses and a compiled schedule, so its
tightness is a statement inside the formal class, and not inside the bounded-state regime of
Section~\ref{sec:hwassumptions}.

\begin{proposition}[Constant rounds with a past-directed local channel]
\label{prop:U2}
A machine with $k=0$, constant per-cell state and constant message alphabet solves $\mathcal{G}_p$ in two
rounds under the phase semantics of Definition~\ref{def:class}(R)---in any case in a constant number of rounds
independent of $p$.
\end{proposition}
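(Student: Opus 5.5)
The plan is to exhibit an explicit backward-only machine with constant state and constant alphabet, and to check its correctness copy by copy. Because the copies in $\mathcal{G}_p$ share neither edges nor slots, Lemma~\ref{lem:A1sum} reduces correctness on $\mathcal{G}_p$ to correctness on one copy of $G^\star$; the same rule compiled identically into every copy then works for all $p$. On one copy, Lemma~\ref{lem:A1} gives the target answer: $A$ outputs $\Cav_A$, $b0$ and $h$ are ON, and responder branch $r$ is ON exactly when $r=b$. So the task is only to bring the bit $b$ to the branch cells $z^r$ and $b2^r$.

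\textbf{Round 1.} The information already sits next to the responder's conflict layer. Under $\theta_b$ the cavity bit of $a2^r$ is $c(a2^r)=[r=b]$. The cell $b2^{1-r}$ shares slot $X_r$ with $a2^r$. We let every $a2^r$ send its cavity bit across its conflict adjacency. Each $b2^{r'}$ then receives $c(a2^{1-r'})=[1-r'=b]$ and sets a blocked flag $\beta(b2^{r'})$ equal to that bit. In the generous variant $b2^{r'}$ already holds this bit in its input (I)(iii), which saves a round. Conflict edges are time-neutral under (D), so this step is allowed for a backward-only machine.

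\textbf{Round 2.} Each $b2^{r'}$ sends $\beta$ on its decreasing-time port to $z^{r'}$, and $z^{r'}$ stores it. The output rule is then:
\begin{itemize}
\item cells of $A$ output their own cavity bit;
\item $b0$ and $h$ output ON;
\item $z^{r'}$ and $b2^{r'}$ output ON exactly when $\beta=0$.
\end{itemize}
Since $\beta(b2^{r'})=0$ exactly when $r'=b$, this reproduces the unique output of Lemma~\ref{lem:A1}. Hence $R=2$, the state is a few bits, and the alphabet is $\{0,1\}$.

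\textbf{Checking the classification.} Every effective dependence runs along conflict edges or along a decreasing-time port, so the machine is backward-only, not bidirectional. The round count matters because of the phase semantics of (R). A message received in round $1$ affects state from round $1$, but it affects an outgoing message only from round $2$. So the relay $a2\to b2\to z$ costs exactly two rounds; this is the main point to verify.

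If the statement also concerns the subdivided $G^\star(L)$, the backward relay along $z^{r}_L\to\cdots\to z^r$ takes $L+1$ rounds. That is still constant for fixed $L$, which is why the proposition says ``in any case a constant number of rounds independent of $p$''.
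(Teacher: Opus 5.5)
Your proposal is correct and is essentially the paper's own proof. Round~1 is a conflict-edge hop from $a2^r$ to $b2^{1-r}$ and round~2 a decreasing-time hop from $b2^r$ to $z^r$; outputs are read off against Lemma~\ref{lem:A1}, copies run concurrently, and subdivision costs $L+1$ rounds. The differences are cosmetic: you transmit the cavity bit itself rather than signalling by presence or absence of \texttt{blocked}/\texttt{unblocked}, and you correctly note that the generous input (I)(iii) makes round~1 unnecessary.
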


\begin{proof}
Round~1: each cell $a2^r_j$ whose own cavity bit is $1$ sends \texttt{blocked} across its conflict edge, which
is received in the same round by $b2^{1-r}_j$. Round~2: each $b2^r_j$ that did not receive \texttt{blocked}
sends \texttt{unblocked} to its predecessor $z^r_j$ along a decreasing-time port, received in that round. Each
$z^r_j$ outputs ON exactly when it received \texttt{unblocked}; each $b2^r_j$ outputs ON exactly when it did
\emph{not} receive \texttt{blocked}---that is, exactly when it sent \texttt{unblocked}; $b0_j,h_j$ output ON;
cells of $A_j$ output their own cavity bits. By Lemma~\ref{lem:A1} this is the unique valid joint output. All copies
run concurrently, so the round count does not depend on $p$. (With subdivision $z^r_j$ is reached after $L$
further past-directed hops, giving $L+1$ rounds---still independent of $p$.)
\end{proof}

\begin{remark}[The exactness caveat, and why the corollary needs the composite family]
\label{rem:U2}
The machine of Proposition~\ref{prop:U2} uses decreasing-time ports and no increasing-time ports, so under
clause (D) it is \emph{backward-only}, not bidirectional. There is no contradiction with Theorem~\ref{thm:A}:
$\mathcal{G}_p$ is hard for forward-only machines and easy for backward-only ones. It follows that a
bidirectionality corollary \emph{cannot} be stated on $\mathcal{G}_p$ alone. It must be stated on the composite
family, where the mirrored copies make each orientation indispensable for a different part of one topology.
\end{remark}

\begin{corollary}[Constant versus linear rounds]
\label{cor:separation}
On the family $\mathcal{H}_p$ of Corollary~\ref{cor:composite}---$4p$ agents, $N=\Theta(p)$ cells, constant
degree and constant depth---every forward-only machine in $\CEA(k)$ satisfies $kR\ge p$, every backward-only
machine satisfies $kR\ge p$, and a machine admitting both orientations solves it with $k=0$ in $R=O(1)$
rounds. Consequently, for $k=O(1)$, no architecture handling all of $\mathcal{H}_p$ in $O(1)$ rounds is
forward-only or backward-only.
\end{corollary}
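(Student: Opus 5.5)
The corollary has three claims, and I would prove them in order: the two lower bounds, then the constant-round upper bound, and finally the consequence for $k=O(1)$.

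\emph{Lower bounds.} Write $\mathcal{H}_p=\mathcal{G}_p\sqcup\widehat{\mathcal{G}}_p$, the $p$ forward copies and the $p$ reversed copies. All copies are slot-disjoint and share no edges, so every query of $\mathcal{H}_p$ is feasible exactly when its restriction to each copy is feasible. Fix any feasible query on the reversed part, say $b'=0$, and let $b\in\{0,1\}^p$ range over the forward part. This gives a family $\Theta$ of $2^p$ feasible queries on $\mathcal{H}_p$ with target set $Z=\{z_1,\dots,z_p\}$ in the forward copies. By Lemma~\ref{lem:A1sum} applied copywise, the unique valid output forces $Y_Z=b$, so there are $2^p$ distinct output vectors. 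The forward cones of the $z_j$ stay inside their own copies and lie in $\{b0_j,z^1_j\}$, whose labels are constant across $\Theta$, so $J^{\rightarrow}_R(\Theta,Z)=0$. Lemma~\ref{lem:F1prime} then gives $2^p\le 2^{kR}$, that is, $kR\ge p$. The only point to check is that the reversed copies, being present in the topology, cannot enlarge the forward cones or the label set; this holds because they are disconnected from the forward copies in the local fabric, and the flags are already counted in $kR$. The backward-only bound is the same argument with the roles swapped: fix the forward part, vary the reversed part, and use Lemma~\ref{lem:reversal} together with the dual form of Lemma~\ref{lem:F1prime}.

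\emph{Upper bound.} The machine runs the backward-only machine of Proposition~\ref{prop:U2} on each $G^\star$ copy and its time reversal on each $\widehat{G^\star}$ copy. The time reversal uses increasing-time ports, since the blocked or unblocked signal must now travel from the conflict layer at rank $1$ to $\widehat z$ at rank $2$. Static addresses (clause (O)) tell each cell which program to run. Both programs take two rounds, so the common output round is $R=2$, independent of the query as (T) requires. Correctness on each copy follows from Lemma~\ref{lem:A1}, transported through Lemma~\ref{lem:reversal} for the reversed copies. Because validity decomposes over slot-disjoint copies, the assembled output is valid. The machine uses $k=0$ and constant state.

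\emph{Consequence.} If $k=O(1)$ and $R=O(1)$, then $kR=O(1)<p$ for all large $p$, contradicting both lower bounds. So no such architecture is forward-only or backward-only.

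I expect the main obstacle to be checking that the reversed machine really is correct. Reversal swaps starts and goals, so in $\widehat{G^\star}$ the query bit enters through $\texttt{isStart}$ at $g^b$. The cavity bits on $a2^r$ still encode $b$, so the conflict message can again be triggered by the own-cavity bit; I would verify this explicitly through Lemma~\ref{lem:reversal}.
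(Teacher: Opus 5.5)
Your proposal is correct and follows essentially the same route as the paper. It fixes the endpoints of the reversed copies and varies only the $\mathcal{G}_p$ part, observes that the forward cones of the $z_j$ are untouched so $J^{\rightarrow}_R=0$, and applies the counting of Theorem~\ref{thm:Bprime} to get $kR\ge p$, mirroring this for backward-only machines. It obtains the upper bound by running the machine of Proposition~\ref{prop:U2} on the $G^\star$ copies and its time reversal on the $\widehat{G^\star}$ copies concurrently, and your explicit check that the reversed program signals along increasing-time ports (from $b2^r$ at rank~1 to $z^r$ at rank~2) is what shows the combined machine admits both orientations.
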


\begin{proof}
For the forward-only bound, restrict attention to the sub-family $\Theta'\subseteq\Theta$ of queries that fix
the endpoints of every reversed copy and vary only the $\mathcal{G}_p$ part. On $\Theta'$ the targets
$Z=\{z_1,\dots,z_p\}$ of the $\mathcal{G}_p$ part still have $2^p$ forced output vectors by
Lemma~\ref{lem:A1sum}, and their forward cones are unchanged---the reversed copies share neither edges nor
slots with them---so $J^{\rightarrow}_R(\Theta',Z)=0$ and the counting of Theorem~\ref{thm:Bprime} applies
verbatim, giving $kR\ge p$. A machine correct on $\Theta$ is correct on $\Theta'$, so the bound holds on
$\mathcal{H}_p$. The backward-only bound is the mirror argument on the reversed part. For the upper bound run
the machine of Proposition~\ref{prop:U2} on the $\mathcal{G}_p$ part and its time-reversal on the reversed
part, concurrently; the combined machine uses decreasing-time ports on one part and increasing-time ports on
the other, so it is neither forward-only nor backward-only.
\end{proof}

\section{A second obstruction: static offsets in a reactive shortest-path policy}
\label{sec:baseline}

Theorem~\ref{thm:A} is about information that \emph{cannot arrive}. There is a second, logically independent
difficulty in exact-arrival routing, and we record it because it disposes of the most natural engineering
response to Section~\ref{sec:obstruction}---``start everybody earlier'', or ``start everybody later''. Here the
information can arrive; it is a fixed policy that uses it in the wrong direction.

Two things should be said before the details, because they set what this section is and is not.
\emph{The evidence is of a different type from the rest of the paper}: Sections~\ref{sec:mainA}
and~\ref{sec:bandwidth} prove theorems about a machine class, whereas this section reports finite enumerations
and simulations. And \emph{the object under test is a whole policy}, not a departure rule in isolation: the
shortest-path plan, the static priority, the holding convention and the one-step give-way all participate, and
Section~\ref{sec:policyscope} shows concretely that they do. The section title names the policy for that
reason.

Almost everything in this section is established by the accompanying program rather than by a proof, and
nothing in it is machine-checked; the one exception is noted below. We are precise about what kind of evidence
each statement carries. \emph{Basic families are
exhaustively enumerated}: every valid joint schedule is generated. \emph{Feasibility of a parameterized family
is established by an explicit witness schedule} verified against the model. \emph{Infeasibility of a
parameterized family is a paper argument}---the exception just mentioned: a forced collision at $T_B-1$,
checked by the program only for the small values of $r$ it can enumerate. And \emph{policy results are an exhaustive simulation over a finite
grid} of departure rules, push rules and tie orders. We do not enumerate all tie orders, and we do not
enumerate the full schedule set of any parameterized family used below---the one place where a full count of a
parameterized instance is quoted, in Section~\ref{sec:strictfamilies}, is a variant that the section does
\emph{not} use, and the accompanying program prints those counts.

\subsection{Two arrival semantics, and why the distinction is load-bearing}
\label{sec:twosemantics}

An instance is a finite undirected location graph $(V,E)$ with agents $i$ carrying a start $s_i$, a goal $g_i$
and an arrival time $T_i$, and horizon $H=\max_iT_i$. A joint schedule assigns to each agent a sequence
$\pi_i(0),\dots,\pi_i(H)$ of locations with
\begin{enumerate}[leftmargin=2.2em,itemsep=1pt,label=(M\arabic*)]
\item $\pi_i(0)=s_i$;
\item $\pi_i(t{+}1)=\pi_i(t)$ or $(\pi_i(t),\pi_i(t{+}1))\in E$, so waiting is allowed;
\item arrival, in one of the two variants below;
\item \emph{vertex conflict}: $\pi_i(t)\neq\pi_j(t)$ for $i\neq j$;
\item \emph{swap prohibition}: not $\pi_i(t)=\pi_j(t{+}1)$ and $\pi_j(t)=\pi_i(t{+}1)$ when
$\pi_i(t)\neq\pi_j(t)$.
\end{enumerate}

\begin{description}[leftmargin=2.2em,itemsep=3pt]
\item[\textsc{strict}] $\pi_i(t)\neq g_i$ for every $t<T_i$, and $\pi_i(t)=g_i$ for every $t\ge T_i$: the goal
is occupied for the first time exactly at $T_i$. This is the location-graph reading of the discipline used in
the rest of the paper, where the goal is a \emph{cell} at a fixed rank and occupying it early is not
expressible.
\item[\textsc{dwell}] $\pi_i(T_i)=g_i$ and $\pi_i(t)=g_i$ for $t\ge T_i$, with no constraint before $T_i$:
early visits to one's own goal are legal, including leaving and returning. This is \emph{fixed-time goal
occupancy with early goal visits permitted and post-arrival parking}; it is not the standard
deadline-with-dwell condition, and it is strictly more permissive than \textsc{strict}.
\end{description}

The distinction is not cosmetic, and the honest way to say why is to report what happens when one ignores it.
An earlier version of this section used only \textsc{dwell} and asserted that a failure there is a fortiori a
failure under the stricter reading. \emph{That inference is invalid}, and the families it was applied to make
the point concretely: three of them are \emph{infeasible} under \textsc{strict}, so every policy fails on them
there and nothing about departure rules follows. The accompanying program therefore prints a feasibility
matrix, over both semantics and every family, before any policy is run.

\begin{center}
\small
\begin{tabular}{@{}lll@{}}
\toprule
family & \textsc{strict} & \textsc{dwell} \\
\midrule
$S_{\mathrm{hold}}$ & 9 schedules & 10 schedules \\
$S_{\mathrm{late}}$ & 11 schedules & 34 schedules \\
$S_{\mathrm{late}}(r)$, $r=1,2,3$ & feasible (witness) & feasible (witness) \\
\addlinespace
$G_{\mathrm{give}}$ & 9 schedules & 10 schedules \\
$G_{\mathrm{late}}$ & \textbf{infeasible} & 11 schedules \\
$G_{\mathrm{late}}$-sym & \textbf{infeasible} & 48 schedules \\
$G_{\mathrm{late}}(r)$, $r=1,2,3$ & \textbf{infeasible} & feasible (witness) \\
\bottomrule
\end{tabular}
\end{center}

Three of the four $G$ families collapse---$G_{\mathrm{give}}$ does not---and the reason is worth stating
precisely, because it is what the late families had to repair. (The families themselves, and the symbols
$x_i$, the responder and the crossers $A_i$ used here, are defined in Sections~\ref{sec:strictfamilies}
and~\ref{sec:dwellstudy}; this paragraph is a forward look at why the matrix reads as it does.) The common
cause is that a goal is \emph{adjacent} to a contested location, which leaves nowhere to redistribute waiting
to; but the collision partner differs. In $G_{\mathrm{late}}$ and its symmetric variant \emph{both} goals are
adjacent to the single shared location $x$ and the two arrival times are equal, so both agents must occupy $x$
at $T-1$ and collide with \emph{each other}; neither has zero slack, and there is no third agent. In
$G_{\mathrm{late}}(r)$ the responder must occupy $x_{r+1}$ at its own $T-1$, where the zero-slack crosser
$A_{r+1}$ also is.

Accordingly this section is in two parts. Section~\ref{sec:strictfamilies} works in \textsc{strict}, the
semantics of the rest of the paper. One of its families, $S_{\mathrm{hold}}$, is literally the same instance as
$G_{\mathrm{give}}$---same graph, same agents, same arrival times---read under the other semantics; it needed
no repair, being already feasible under \textsc{strict}, and what changes is only where an early-departing
agent has to put its waiting. The two late families are new, built with a private corridor after the contested
location so that waiting can be redistributed there. Section~\ref{sec:dwellstudy}
keeps the \textsc{dwell} results as an independent robustness study of a different, and in practice common,
arrival discipline---one in which a vehicle may arrive early and wait on its stand. \emph{We do not transport
the results of Section~\ref{sec:dwellstudy} to \textsc{strict}}, and the main theorems of this paper concern
\textsc{strict} only.

\subsection{The policy class}
\label{sec:policyclass}

A policy $\Pi(\delta,\mathrm{push},\mathrm{tie})$ is fixed by three parameters. Writing
$\mathrm{slack}_i=T_i-\mathrm{dist}(s_i,g_i)$:

\begin{itemize}[leftmargin=1.6em,itemsep=2pt]
\item the \textbf{departure rule} $\delta$ gives the number of steps agent $i$ waits at its start before
following its plan: $0$ for \emph{earliest}, $\mathrm{slack}_i$ for \emph{latest}, and
$\max(0,\mathrm{slack}_i-r)$ for \emph{residual-$r$}, which keeps $r$ units of slack in hand. Thus latest $=$
residual-$0$, and earliest $=$ residual-$r$ for any $r$ at least the largest slack;
\item the \textbf{push rule} says who may be pushed aside by a higher-priority mover: under \textsc{yw} only
an agent camping on its own goal ahead of time; under \textsc{yh} only an agent that is stationary
\emph{by plan}; under \textsc{ya} any stationary agent. Under \textsc{strict} no agent ever occupies its goal
early, so \textsc{yw} is \emph{vacuous} there and \textsc{yh} is its natural analogue;
\item the \textbf{tie-break} is a total order on locations, used among give-way targets. The sweeps below
range over \emph{two chosen} orders per family---the one preferring the \emph{corner} (the far, dead-end
location of the corridor, $c_5$ in $S_{\mathrm{hold}}$) and the one preferring the pocket $w$, or ascending
and descending location index where those two roles are absent---never over all total orders. ``All six
push/tie combinations'' below therefore means the three push rules against those two orders.
\end{itemize}

Each agent takes a free-arrival shortest path to its goal ignoring the others, ties by ascending location
index. Under \textsc{dwell} it walks to the goal and camps. Under \textsc{strict} it cannot camp on its goal, so it must
wait somewhere else, and the policy has it walk the path, stop at the \emph{penultimate} location, hold there
until $T_i-1$ and enter the goal at $T_i$. \emph{This is a choice made by the policy, not a consequence of the
semantics}: holding at the last location before the goal is what commits latest, but the strict discipline
merely forbids early goal occupancy and says nothing about where the waiting goes. Section~\ref{sec:policyscope}
records what that choice costs the interpretation of the results. Priority is static:
ascending slack, then ascending agent index. Execution is synchronous, and within a step the proposals are
resolved to a fixed point by (i) withdrawing swaps, lower priority yielding; (ii) awarding a contested
location to the highest priority; and (iii) \emph{push-aside}: a mover whose target is held by a stationary
lower-priority agent that $\mathrm{push}$ permits to be moved makes that agent step to one legal adjacent
location---excluding locations proposed by others, locations held by other stationary agents, swaps, and under
\textsc{strict} the pushed agent's own goal before its arrival time---preferring the smallest free-arrival
distance to its own goal and breaking ties by $\mathrm{tie}$; the pushed agent then replans. If it has no legal
step, the requester stays put. The program asserts that the fixed point is reached rather than truncating the
iteration. A run \emph{fails} if some agent violates its own (M3), or if two consecutive steps pass with no
movement while some agent is incomplete and none is waiting by plan.

\subsection{Strict exact arrival}
\label{sec:strictfamilies}

Counts below are \emph{locations}, not cells; an instance on $\ell$ locations with horizon $H$ has
$\ell\cdot(H+1)$ cells per agent. There are three strict test cases---two new families, and the strict reading
of the instance that Section~\ref{sec:dwellstudy} uses under the other semantics---and all three are given as
generators in the accompanying program.

\begin{itemize}[leftmargin=1.6em,itemsep=3pt]
\item $S_{\mathrm{hold}}$---the same instance as $G_{\mathrm{give}}$ in Section~\ref{sec:dwellstudy}, read
under \textsc{strict}: a corridor $c_0-\cdots-c_5$ with a pocket $w$ adjacent only to $c_4$; seven locations,
two agents, horizon $6$. A through-traveller $P$ runs from $(c_0,0)$ to $(c_5,5)$ with zero slack;
a holder $C$ runs from $(w,0)$ to $(c_3,6)$ with slack four. Under \textsc{strict}, $C$ cannot sit on its goal
$c_3$ early, so it must wait somewhere; \emph{under the policy $\Pi$} it holds at the \emph{penultimate}
location $c_4$, which lies on $P$'s only path and which $P$ needs at $t=4$ with no slack. The discipline does
not require that choice---Section~\ref{sec:policyscope} exhibits strict schedules in which $C$ waits
elsewhere---so this is a counterexample to the policy, not to the departure rule alone.
\item $S_{\mathrm{late}}$: corridors $a_0-x-a_2-a_G$ and $b_0-b_1-x-b_3-b_G$ sharing exactly the location $x$;
eight locations, two agents, horizon $5$; $\mathrm{slack}_A=2$, $\mathrm{slack}_B=1$. Each agent has a private
corridor \emph{after} $x$ on which it can hold, and that is exactly what makes the instance feasible under
\textsc{strict}: waiting can be redistributed to a private location. Latest departure pushes all waiting to the
start, so both agents want $x$ at $t=3$ with no slack left; earliest departure separates them in time, $A$
crossing $x$ at $t=1$ and $B$ at $t=2$, each then holding privately.
\item $S_{\mathrm{late}}(r)$: designed against residual-$r$. A responder $B$ traverses
$b_0-x_1-\cdots-x_{r+1}-y-b_G$, of distance $r+3$, with slack $r+1$ and arrival time $T_B=2r+4$; the
\emph{private} location $y$ sits between the last shared location and the goal. For each $i$ a crosser $A_i$ occupies $x_i$ at exactly time $2i$ on a private
corridor of length $2i$, so its slack is zero and it departs at once under every departure rule. A $B$ that
departed by residual-$r$ meets $A_i$ at $x_i$ exactly when it has already been delayed $i-1$ times, is delayed
once at each of the $r+1$ shared locations, and misses its arrival; a $B$ that departed earliest is at $x_i$ at
time $i$, never at $2i$, and passes every crosser. The location $y$ is what the \textsc{dwell} design lacked. Deleting it while
keeping the slack at $r+1$---that is, shortening the arrival time to $T_B=2r+3$, which is exactly the
\textsc{dwell} family $G_{\mathrm{late}}(r)$ of Section~\ref{sec:dwellstudy}---forces $B$ to occupy $x_{r+1}$
at $T_B-1=2r+2$, where the zero-slack crosser $A_{r+1}$ also is, so that instance is infeasible under
\textsc{strict}. The deadline matters as much as the location: deleting $y$ but \emph{keeping} $T_B=2r+4$, and
so raising the slack to $r+2$, leaves a strict-feasible instance---the accompanying program enumerates it and
prints $6$, $26$ and $132$ schedules for $r=1,2,3$---which is a different family and not the one used below.
\end{itemize}

\begin{center}
\small
\begin{tabular}{@{}>{\raggedright\arraybackslash}p{0.19\linewidth}>{\raggedright\arraybackslash}p{0.19\linewidth}>{\raggedright\arraybackslash}p{0.54\linewidth}@{}}
\toprule
rule & defeated by & under which remaining policy parameters \\
\midrule
latest, i.e.\ residual-$0$ & $S_{\mathrm{late}}$ & all six push/tie combinations; no give-way step occurs, so
  the tie-break is never consulted \\
\addlinespace
residual-$r$, $1\le r\le8$ & $S_{\mathrm{late}}(r)$ & all six push/tie combinations, for the same reason; the
  crossers have zero slack and win every contest \\
\addlinespace
earliest & $S_{\mathrm{hold}}$ & under \textsc{yw} for both tie orders---and \textsc{yw} is vacuous under
  \textsc{strict}, so the holder simply cannot be moved. Under \textsc{yh} and \textsc{ya} it fails for the
  corner-first order and \emph{survives} the pocket-first order \\
\bottomrule
\end{tabular}
\end{center}

Three remarks. First, $S_{\mathrm{late}}$ also defeats residual-$1$, and $S_{\mathrm{hold}}$ also defeats
residual-$1$, residual-$2$ and residual-$3$; the table lists one defeating family per rule, not all of them.
Second, the coverage of the residual rules is \emph{independent of the push rule and of the tie order}, over
the whole grid: those failures are pure resource contests in which no give-way step occurs, so the tie-break is
never consulted. The same independence holds in the \textsc{dwell} study of Section~\ref{sec:dwellstudy}---it
is a property of the resource-contest mechanism, not of the arrival discipline. Third, the one rule whose
defeat does depend on the remaining parameters is earliest departure, in both semantics and for the same
reason: an early departer becomes an obstacle, and whether that is fatal depends on where it is pushed.

\begin{conjecture}[Coverage of the residual rules, strict exact arrival]
\label{conj:coverage}
For every $r\ge1$ the family $S_{\mathrm{late}}(r)$ defeats the rule residual-$r$ under every push rule and
every tie-break. Hence every departure rule that retains a fixed number of units of slack is defeated by some
member of the suite.
\end{conjecture}

The construction is parametric in $r$ and the argument above is uniform in $r$, but only $r=1,\dots,8$ has
been simulated, and only over the stated finite grid of push rules and two tie orders, so the general
statement is a conjecture and the table rows are claims about the listed values and settings.

\subsection{What the policy contributes, and what the discipline contributes}
\label{sec:policyscope}

It is tempting to read the strict results as facts about departure rules alone. They are not, and
$S_{\mathrm{hold}}$ shows why. Exhaustive enumeration gives that instance nine valid joint schedules under
\textsc{strict}, and in \emph{every} one of them the holder $C$ is in the pocket $w$ at $t=4$, the moment the
through-traveller needs $c_4$; $C$ enters its goal only at $t=6$. Eight of the nine have $C$ advance up the
corridor and then \emph{retreat} to the pocket---for instance $C$ visiting $c_4$ at $t=1$, returning to $w$ at
$t=4$, and going $w\to c_4\to c_3$ to arrive exactly on time; only the ninth waits in the pocket throughout.
Both counts are asserted by the accompanying program (block~S1).

So the strict discipline does \emph{not} force an early departer to sit where it obstructs. What forces that is
the policy: a shortest-path plan with no voluntary retreat, and holding at the penultimate location. Relax
either of those---let the plan back off into the pocket, or hold somewhere harmless---and $S_{\mathrm{hold}}$
stops being a counterexample. The give-way parameters are a third ingredient, but they decide only whether the
damage can be repaired once done: under \textsc{yw} no give-way step fires at all, so widening the give-way
from one step to many changes nothing there.

Two consequences. First, the results of Section~\ref{sec:strictfamilies} above, and of
Section~\ref{sec:dwellstudy} below, are properties of the policy class of Section~\ref{sec:policyclass}, and we
say ``defeated'' only in that sense. Second, the same caution applies to
the \textsc{dwell} study: there too the earliest rule's failure depends on where a pushed agent is sent. What
does \emph{not} depend on the push and tie choices in the tested grid is the residual rules' failure, which is
a resource contest in which no give-way step occurs at all---the tie-break is never consulted, and the outcome
is the same across every cell of the grid.

\subsection{Early arrival permitted: an independent robustness study}
\label{sec:dwellstudy}

The \textsc{dwell} discipline is not the one the theorems of this paper are about, but it is the one a vehicle
that may arrive early and wait on its stand actually obeys, so the same question is worth asking there. The
families are $G_{\mathrm{give}}$ (the corridor-and-pocket instance above, with $C$ camping on its goal rather
than holding before it), $G_{\mathrm{late}}$ (two corridors of unequal length sharing one location, both goals
adjacent to it) with a symmetric variant, and $G_{\mathrm{late}}(r)$, in which $B$ traverses
$b_0-x_1-\cdots-x_{r+1}-b_G$, of distance $r+2$, with slack $r+1$ and arrival time $T_B=2r+3$. The last is
obtained from $S_{\mathrm{late}}(r)$ by deleting the private location $y$ \emph{and} shortening $T_B$ by one so
as to preserve the slack; both changes are needed, as Section~\ref{sec:strictfamilies} explains. Feasibility under \textsc{dwell} is in the matrix above.

\begin{center}
\small
\begin{tabular}{@{}>{\raggedright\arraybackslash}p{0.19\linewidth}>{\raggedright\arraybackslash}p{0.19\linewidth}>{\raggedright\arraybackslash}p{0.54\linewidth}@{}}
\toprule
rule & defeated by & under which remaining policy parameters \\
\midrule
latest & $G_{\mathrm{late}}$ & all six push/tie combinations; no give-way step occurs \\
\addlinespace
residual-$r$, $1\le r\le8$ & $G_{\mathrm{late}}(r)$ & all six push/tie combinations \\
\addlinespace
earliest & $G_{\mathrm{give}}$ & under \textsc{yw} and \textsc{yh} for both tie orders; under \textsc{ya} for
  the corner-first order, \emph{surviving} the pocket-first order \\
\bottomrule
\end{tabular}
\end{center}

Two points come out of the sweep. On $G_{\mathrm{late}}$ the failure does not depend on the give-way details
at all---whoever loses the contest has no slack left, and no give-way step occurs---and in the symmetric
variant the agent index merely decides \emph{which} agent is lost. And the parameterized family is genuinely needed: residual-$1$ survives
both basic families, which is the gap $G_{\mathrm{late}}(1)$ closes.

\paragraph{A negative result, because it bounds the method.}
One would like to remove the last row's dependence on the tie-break by placing a mirrored second copy of
$G_{\mathrm{give}}$ alongside the first, so that any fixed choice fails on one of the two orientations. That
does not work, and cannot. The tie-break is a total order on \emph{locations}, and two copies occupy disjoint
locations, so a single order can prefer the pocket in both copies at once. The accompanying program builds a
genuinely mirrored pair---a second copy whose corridor is labelled in reverse---and confirms that this order
survives. \emph{The statement is about \textsc{ya}}: under \textsc{yw} a single copy already fails for both
tie orders, so there is nothing to remove. Under \textsc{ya}, no direct sum of copies can defeat every
location-order tie-break; doing so would require restricting the policy class to tie-breaks invariant under
relabelling, which we do not do.

\paragraph{Cost of the retreat to systematic search.}
Placing $m$ independent copies of $G_{\mathrm{give}}$ side by side and running the earliest rule with
\textsc{yw} and the ascending-index tie order, the greedy layer fails on every copy and the number of give-way
and block events grows linearly in $m$. As a model of the systematic layer we defined a naive joint
depth-first search whose children are ordered by \emph{increasing} free-arrival distance to the agent's own
goal---closest-to-goal first, the same rush-to-goal preference the reactive layer uses. \emph{This is our own
model, defined in the accompanying program; it is not an implementation of any published planner, and we have
measured none.} Because that ordering puts the rush-to-goal move first, the initial descent enters the trap in
every copy at once, and the node expansions to first solution grow $8,17,98,827$ for $m=1,2,3,4$.

\subsection{What this section does and does not show}
\label{sec:baselinescope}

It shows that a \emph{greedy layer} fails: a planner that computes free-arrival shortest routes, applies a
static departure offset, and resolves conflicts reactively by static priority with one-step give-way. It shows
nothing about complete search. Since every family is feasible in the semantics it is used for---which the
feasibility matrix establishes---any search that is complete for that model finds a solution on it. That is a
statement about the model. We have run no published planner, and make no claim whatever about the behaviour of
any implementation, in particular none about the LaCAM family \cite{lacam,lacamstar}, whose setting is standard
MAPF rather than either variant used here. The policy class is the one defined in
Section~\ref{sec:policyclass}; look-ahead policies, multi-step retreat, and policies that renegotiate planned
times lie outside it and we say nothing about them. Finally, the two semantics are kept apart on purpose: no
result of Section~\ref{sec:dwellstudy} is claimed for \textsc{strict}, and the main theorems of this paper
concern \textsc{strict} only.

\paragraph{What this section adds to Theorem~\ref{thm:A}.}
The two obstructions are orthogonal in mechanism and identical in root. In $G^\star$ the information that would
correct an early commitment is directionally unreachable; in $S_{\mathrm{hold}}$ and $S_{\mathrm{late}}$ it is
reachable but a static rule reads it wrongly. The common root is the \SUCC\ obligation isolated in step~(iii) of
Lemma~\ref{lem:A1}: a partially-ON branch is not a valid support, so whether an early cell may commit is
decided later and elsewhere.

\section{Implications for hardware, and the resource accounting}
\label{sec:hardware}

\subsection{The assumptions, stated in full}
\label{sec:hwassumptions}

A wall-clock reading of Corollary~\ref{cor:separation} rests on the following. Items (H1)--(H6) are load
bearing for the separation itself; (H7) is not, and we say below exactly what it does and does not buy.

\begin{enumerate}[leftmargin=1.8em,itemsep=1pt,label=(H\arabic*)]
\item bounded local degree;
\item \textbf{bounded per-cell state and bounded message alphabet}---note that Definition~\ref{def:class}(S)
does \emph{not} impose this, and both the gathering machine of Proposition~\ref{prop:nonvacuous} and the bus
machine of Section~\ref{sec:nonclaims}, item~\ref{nc:bus}, exploit its absence;
\item a fixed per-cell circuit, so that one cell's update is a bounded amount of work;
\item one hop per unit of physical time, so that a round is a unit of wall-clock time;
\item local wire length and delay independent of $N$---which a two-dimensional layout of a bounded-degree
fabric can provide but a fabric with long-range edges cannot;
\item clock distribution and synchronization across the array charged as a constant rather than as a function
of $N$;
\item the reduction tree and fan-out implementing a global aggregate flag charged as one round. A genuine
reduction over $N$ cells has depth $\Omega(\log N)$, so a fabric with global flags either pays that depth or
pipelines it.
\end{enumerate}

Under (H1)--(H6), on the family $\mathcal{H}_p$ with $N=\Theta(p)$ cells, a single-direction fabric with
$k=O(1)$ needs $\Omega(N)$ wall-clock time while a bidirectional one needs $O(1)$. The bidirectional side uses
no flags at all (Proposition~\ref{prop:U2}), and the one-directional side is a lower bound on rounds, so
(H7) is not needed for the separation---dropping it only makes each flag round cost more and the separation
wider. What (H7) buys is the reading of the \emph{matching upper bound}: without it,
Proposition~\ref{prop:U1}'s $\lceil p/k\rceil+O(1)$ rounds do not translate into $\Theta(N/k)$ wall-clock
time, and the tightness claim becomes a statement about rounds only.

Three further limits belong in the same breath. The separation is \emph{family-specific}: it is not a
statement about arbitrary multi-agent EA instances (Section~\ref{sec:nonclaims}, item~\ref{nc:notall}) and not
a lower bound against sequential algorithms (item~\ref{nc:cpu}). The upper bound of
Proposition~\ref{prop:U1} lives in the formal class, which permits topology-specific addresses and growing
state, so it is \emph{not} an upper bound for a machine constrained by (H1)--(H7); and
Proposition~\ref{prop:nonvacuous} lives there too, so the fact that the class can solve the problem at all is
not by itself a hardware claim. Finally, (H1)--(H7) constrain resources per cell and per wire; they do not by
themselves say the fabric is one tile repeated. A repeated-tile design is the engineering motivation of
Section~\ref{sec:whycellular}, not a clause of the formal model, and nothing below is derived from it.

\subsection{Five currencies}
\label{sec:currencies}

The results assemble into what we state as a \emph{scoped accounting principle}, not a theorem: within the
models analyzed here, two-sided coordination information is never free. It is paid in one of five currencies.

\begin{center}
\small
\begin{tabular}{@{}>{\raggedright\arraybackslash}p{0.24\linewidth}>{\raggedright\arraybackslash}p{0.30\linewidth}>{\raggedright\arraybackslash}p{0.38\linewidth}@{}}
\toprule
currency & where it is charged & reference \\
\midrule
preprocessing & the cavity, being forward reachability from $s$ intersected with backward reachability to $g$,
  is a two-sided computation performed off the machine
  & Lemma~\ref{lem:cavity-max}, Cor.~\ref{cor:t1-vacuous} \\
rule structure & every minimal sufficient rule basis imposes both \PRED\ and \SUCC
  & companion paper \cite{paperS} \\
local propagation & causally effective dependence in both temporal orientations
  & Theorem~\ref{thm:A}, Cor.~\ref{cor:composite} \\
global bandwidth & for a one-directional machine, $J_R+kR\ge p$, hence $\lceil p/k\rceil$ rounds when $J_R=0$
  & Theorem~\ref{thm:Bprime}, Prop.~\ref{prop:U1} \\
joint-state representation & a state space exponential in the number of agents
  & Proposition~\ref{prop:jointlift} \\
\bottomrule
\end{tabular}
\end{center}

These are not all of one kind: the rule-structure obligation constrains \emph{which states are stable}, whereas
propagation, bandwidth and representation are \emph{runtime} resources. The principle is an accounting across
two ledgers, not a single conserved quantity, and we do not claim the list is exhaustive.

\section{Related work}
\label{sec:related}

We state relations, not priority. Our search is a first pass at the level of bibliographic records and
abstracts (Section~\ref{sec:searchscope}); we did not find prior work on the exact conjunction studied here,
and that is a scoped statement, not a claim of nonexistence.

\subsection{One-way cellular automata}
\label{sec:onewayca}

Restricting a cellular array to one-way communication is an established subject with a continuous literature
from 1980 onward. Of the neighbourhoods we surveyed it is the one that shares our question most directly. One-way bounded cellular automata were
introduced by Dyer \cite{dyer1980}; Umeo, Morita and Sugata gave a deterministic one-way simulation of two-way
real-time cellular automata \cite{umeo1982}; Ibarra and Jiang \cite{ibarra1987} and Chang, Ibarra and Vergis
\cite{chang1988} developed the power of one-way arrays and of one-way communication; R\'oka extended one-way
cellular automata to Cayley graphs \cite{roka1994}; Terrier exhibited a language not recognizable in real time
by one-way cellular automata \cite{terrier1996}; Kutrib and Malcher studied arrays whose communication is
restricted in the number or the width of messages \cite{kutrib2009,kutrib2010}; and Grandjean, Grente and
Terrier gave a recent logical account of what real-time one-way and two-way arrays compute
\cite{grandjean2024}. We have verified the
bibliographic record of every one of these against primary publisher metadata; we have \emph{not} read the full
texts (Section~\ref{sec:searchscope}).

That lineage and this paper share one question---what does forbidding one direction of communication
cost?---and differ in nearly every other coordinate. The differences are not defensive caveats; they are what
the present model is for.

\begin{center}
\small
\begin{tabular}{@{}>{\raggedright\arraybackslash}p{0.28\linewidth}>{\raggedright\arraybackslash}p{0.30\linewidth}>{\raggedright\arraybackslash}p{0.36\linewidth}@{}}
\toprule
lineage & what is studied there & what is different here \\
\midrule
one-way bounded and cellular arrays \cite{dyer1980,ibarra1987} & language recognition on a uniform array; a
  single acceptor output & runtime endpoint queries on a fixed nonuniform topology; a distributed per-cell mask
  output \\
\addlinespace
one-way simulation of two-way arrays \cite{umeo1982} & whether, and with what time overhead, one-way
  simulates two-way & target-local \emph{actual} dependence, and forcing over a query family rather than
  simulation \\
\addlinespace
the power of one-way communication \cite{chang1988} & comparing the power of one-way and two-way parallel
  models & an aggregate transcript, per-cell outputs, and direct-sum forced bits \\
\addlinespace
one-way automata on other graphs \cite{roka1994} & one-way simulation on a general communication graph &
  agent-ranked DAGs, time-neutral conflict edges, and endpoint-conditioned input \\
\addlinespace
real-time non-recognizability \cite{terrier1996,grandjean2024} & a language a one-way array cannot recognize
  in real time, and the logical characterisation of what real-time arrays compute & a forced disagreement at
  one named cell on one fixed topology, under a promise that the query is feasible \\
\addlinespace
sparse or one-bit communication \cite{kutrib2009,kutrib2010} & limiting the number or the width of local
  messages & a finite local alphabet whose width is not bounded by a constant across topologies, combined
  with an aggregate-only global channel and an orientation restriction \\
\bottomrule
\end{tabular}
\end{center}

Three differences carry the weight. \emph{Nonuniformity}: our machines may compile the topology and each cell's
address into the transition rule (Remark~\ref{rem:nonuniform}), whereas that literature studies uniform
arrays; nonuniformity \emph{strengthens} our lower bounds for the task posed here, since a
uniform-code machine performing this task is a special case. It is not the reason the results fail to transfer
to that literature's setting; the reasons are the next two, together with the topology and output conventions
(Section~\ref{sec:nonclaims}, item~\ref{nc:nonuniform}). \emph{The output task}: a per-cell membership bit of a
valid joint mask rather than one acceptance bit, so the obstruction is a forced disagreement at a named cell
rather than a language separation. \emph{The input}: runtime endpoint markers together with the exact
reachability cache, on a topology fixed offline---a pairing we did not find in the sources surveyed, and the
reason Corollary~\ref{cor:t1-vacuous} is the pivot of the paper. Until a full-text pass is done we claim only
that this exact conjunction was not found in those sources.

\subsection{Other neighbours}
\label{sec:neighbours}

\paragraph{Transcript counting.}
The counting in Lemma~\ref{lem:F1prime} is standard: bound the number of interaction transcripts against the
number of outputs that must be distinguished, in the tradition of communication as a resource \cite{yao1979},
round-restricted communication complexity \cite{nisanwigderson1993}, information crossing a cut
\cite{pelegrubinovich2000}, and the number-in-hand and shared-blackboard settings
\cite{kushilevitznisan1997,phillips2016,hazankushilevitz2017}. We claim no new counting principle, and no priority for the
pairing either: what we did not find in the sources surveyed is its combination with a directed-cone locality
statement on this task and model.

\paragraph{Locally connected arrays with global buses.}
The architecture pattern is classical: meshes with a broadcast facility \cite{stout1983}, array processors with
a global bus \cite{bokhari1984}, meshes with multiple broadcasting \cite{chen1990}, and VLSI area--time and
bused-mesh lower bounds \cite{thompson1979,condon1996,cheunglau1993}; weaker and stronger global primitives
include the beeping model \cite{cornejokuhn2010} and the Broadcast Congested Clique
\cite{druckerkuhnoshman2014}. Against these, our global primitive is a many-to-one commutative aggregate
returned to all cells; the runtime input is an endpoint query on an offline-known topology together with its
reachability cache; the output is a per-cell membership bit; and the lower-bound object is a direct-sum family
forcing target bits through the transcript.

\paragraph{Locality and advice.}
The locality theory of distributed graph algorithms \cite{linial1992,naorstockmeyer1995} and local distributed
decision \cite{fraigniaud2011} is the natural home of Lemma~\ref{lem:cone}. Distributed computation with advice
\cite{fraigniaud2009advice,fraigniaud2010advice,balliu2025} is the closest relative of $J_R$, the differences
being that advice there is designed whereas the cavity is fixed by the problem structure, and that on our
gadget the relevant cones are provably query-invariant, so $J_R=0$ is a theorem rather than a modelling choice.
Reachability and distance labels \cite{cohen2003} are conceptually adjacent to the cavity, and proof-labeling
schemes \cite{korman2010,feuilloley2016} distribute per-node certificates for local verification; both answer
queries centrally or verify a given assignment, whereas our task is distributed computation of the bits under a
direction restriction and conflict coupling.

\paragraph{Program chopping.}
A program chop collects the elements through which effects propagate from a source to a target
\cite{jacksonrollins1994}, and interprocedurally the naive intersection of a forward and a backward slice
over-approximates it \cite{repsrosay1995}. On a plain digraph the chop is exactly the support region, so
computing all support elements between two endpoints is not new; the difference is operational, chopping being
a centralized analysis.

\paragraph{Broadcast CONGEST reachability.}
Among the lower bounds we surveyed, the one whose proof shape is nearest to ours is the Broadcast CONGEST
single-source reachability bound of Chechik and Mukhtar \cite{chechikmukhtar2019}, proved by the same
locality-plus-transcript-counting argument; see
also \cite{ghaffariudwani2015,fanwangwu2012}. Their family varies the graph and each vertex reports
single-source reachability under per-node broadcasts; ours varies only runtime endpoint markers on a fixed
topology, supplies the reachability cache for free, targets membership bits of \emph{any} valid joint mask, and
confines the global channel to one-bit many-to-one aggregates.

\paragraph{Multi-agent path finding.}
The setting is extended MAPF with pinned departure and arrival \cite{stern2019,mapftw,precedence}; the LaCAM
lineage is the modern reference point for large-scale MAPF \cite{lacam,lacamstar} and we do not claim to
surpass it---Section~\ref{sec:baseline} is explicitly not about it. Representations of the full solution set
have been studied \cite{bardugo2025}. Single-agent planning over an explicit space-time graph is classical
\cite{sipp}, and cellular-automata planners propagate over space-time grids \cite{marchese2006}; earlier work
by the author and colleagues gave a SAT-based implementation of a MAPF variant on reconfigurable hardware
\cite{huang2022}. Our aim differs: we prove what a restricted machine class cannot do.

\subsection{Scope of the search}
\label{sec:searchscope}

We surveyed, at the level of bibliographic records and abstracts rather than full-text reading: one-way
cellular automata and one-way communication in parallel models; locality and local distributed decision;
distributed computation with advice; VLSI, mesh and bused-mesh lower bounds; multiparty communication
complexity and shared blackboards; and reachability, distance and proof-labeling schemes. Within that scope we
did not find prior work studying direction-restricted distributed lower bounds under per-cell
reachability-cache inputs with per-cell mask output. We have \emph{not} read the full texts of the one-way
cellular automata literature, nor covered the citation graph, exhaustive proceedings scans, temporal networks,
self-stabilization, population protocols, distributed runtime verification, non-English literature, theses,
patents, or incompletely indexed 2025--2026 preprints. This is a scoped statement, not a claim of exhaustive
priority, and the full-text pass over Section~\ref{sec:onewayca} remains open (Section~\ref{sec:future}).

\section{Discussion}
\label{sec:discussion}

\subsection{What is not claimed}
\label{sec:nonclaims}

\begin{enumerate}[leftmargin=1.8em,itemsep=4pt]
\item\label{nc:bus} \textbf{``Forward-only machines cannot solve this'' is false without qualification.} With
$k\ge1$ and unbounded rounds there is a forward-only machine that solves any instance of the class. Since the
topology is known offline, give each cell a static address and schedule one designated cell per round to
contribute a bit to an OR flag; then run a \emph{fixed-length schedule determined by the topology alone}: (i)
collect the endpoint markers and cavity bits in address order, (ii) use the factored cell bits as a distributed
tape enumerating \emph{all} candidate joint masks, (iii) for each candidate let every cell evaluate its local
path condition, pin condition and slot exclusion and aggregate any violation through the flag, each cell
retaining the lexicographically first candidate found valid so far, and (iv) at the topology-determined
worst-case round $R_*$---the same for every query---have every cell emit its bit of the retained candidate.
Termination is at $R_*$ whatever the query, as clause (T) requires; the machine does not stop early when it
finds a solution. This machine uses no temporal local channels at all, so by clause (D) it is forward-only. What it pays
is an enormous $R$ and an enormous per-cell state count, which clause (S) permits. \emph{This is why the main
results are stated as a pair}: an impossibility at $k=0$ (Theorem~\ref{thm:A}) and a price at $k>0$
(Theorem~\ref{thm:Bprime}).
\item \textbf{We retract ``forward-only forces joint-state expansion.''} The trade-off admits several
escapes---a long run (item~\ref{nc:bus}), a wide channel $k=\Omega(p)$, external advice $a\ge p$
(Remark~\ref{rem:advice}), sequential enumeration over the factored variables, or shared memory. What follows
from Theorem~\ref{thm:Bprime} is the disjunction ``pay $J_R+kR\ge p$, or admit past-directed local
propagation.'' Proposition~\ref{prop:jointlift} is an \emph{upper} bound and cannot serve as the opposing side
of a representation lower bound.
\item\label{nc:notall} \textbf{``Forward-only fails on every multi-agent family'' is false.} Symmetric families
solvable by static priority exist: in the two-diamond witness of the companion paper \cite{paperS}, fixing a
priority resolves the conflict with zero rounds of information. Vertex conflicts plus cavity input do not by
themselves imply a direction requirement. Our theorems are of existential-topology, universal-query form
(Remark~\ref{rem:quant}).
\item\label{nc:samequery} \textbf{No same-query bidirectionality.} Corollaries~\ref{cor:composite}
and~\ref{cor:separation} are architecture-level and query-family-level. We do not show that on a single query,
or along a single causal path, both orientations are simultaneously causally effective.
\item\label{nc:feasible} \textbf{Feasibility promise.} All results assume the query is feasible. Detecting
infeasibility, completeness for $\mathrm{FAIL}$, and certificates of infeasibility are not studied, and the
lower bounds say nothing about them.
\item\label{nc:ea} \textbf{The direction lower bounds do not separate exact arrival from the free-arrival or
deadline variants.} Theorems~\ref{thm:A} and~\ref{thm:Bprime} and the propositions around them are proved for
exact-arrival instances, where the obstruction is operationally natural, but no proof uses the illegality of
early arrival, and we do not show that analogous gadgets fail to port to those variants. This is a statement
about Sections~\ref{sec:warmup}--\ref{sec:bandwidth}. Section~\ref{sec:baseline} does compare two arrival
disciplines, and there the difference is real---families feasible under one are infeasible under the
other---but that comparison is about static departure policies, not about the direction bounds.
\item\label{nc:conflicts} \textbf{Vertex conflicts only.} Edge conflicts, swap constraints in the
machine-class part, capacities greater than one, kinematic attributes and asynchrony are outside the model.
(The enumeration of Section~\ref{sec:baseline} does forbid swaps; the machine class of Section~\ref{sec:class}
does not model them.)
\item\label{nc:nonuniform} \textbf{Not a statement about classical cellular automata.} The lower-bound class
permits topology-specific transition code and per-cell state; since that only adds power, the bounds do cover
uniform-code machines \emph{performing the task defined here}. They say nothing about the classical
one-way cellular automaton setting, which differs in the task (language recognition with a single acceptor
output), in the input convention (no runtime endpoint query, no reachability cache) and in the topology
(a uniform array rather than agent-ranked DAGs with conflict edges). Transferring anything to that setting
needs a separate argument (Section~\ref{sec:onewayca}).
\item \textbf{Nothing about NP-hardness or exponential time.} Bidirectionality does not move a problem between
complexity classes.
\item\label{nc:cpu} \textbf{No claim that a CPU is necessarily $N$ times slower.} Under (H1)--(H7) a sequential
machine \emph{simulating the same round update} pays $\Theta(N)$ work per round; that is a statement about one
simulation strategy, not a lower bound against all sequential algorithms.
\item \textbf{No claim about any deployed system.} The results bound a machine class. We prove nothing about
the convergence of any stochastic dynamics, and assume none.
\item \textbf{Not a complete lower bound for deadlock detection.} It is a direct-sum lower bound for the forced
joint-extendability bit of a constant-size selector.
\item \textbf{Randomization is out of scope.} All machines here are deterministic; a bounded-error extension
would need a distributional form of the transcript counting.
\end{enumerate}

\subsection{Status of each claim}
\label{sec:status}

We separate three questions that are easy to conflate: whether a statement is proved, whether the proof has
been mechanized, and whether its priority has been established.

\begin{center}
\small
\begin{tabular}{@{}>{\raggedright\arraybackslash}p{0.25\linewidth}>{\raggedright\arraybackslash}p{0.19\linewidth}>{\raggedright\arraybackslash}p{0.28\linewidth}>{\raggedright\arraybackslash}p{0.20\linewidth}@{}}
\toprule
statement & mathematical status & mechanization status & priority status \\
\midrule
Lemma~\ref{lem:cavity-max}, Prop.~\ref{prop:jointlift}, Prop.~\ref{prop:nonvacuous}, Lemma~\ref{lem:cone},
Lemma~\ref{lem:F1prime}, Lemma~\ref{lem:reversal}
  & proved here & not formalized & standard or elementary; none claimed \\
\addlinespace
Lemma~\ref{lem:A1}, Lemma~\ref{lem:A2}, and their time-reversed forms
  & proved here & exhaustively enumerated for $L=1$ to $6$, both output conventions; the $L=1$ case is also
    checked in Lean under the finite Boolean encoding described below
  & first pass only \\
\addlinespace
Lemma~\ref{lem:A1sum}
  & proved here, uniformly in $p$ & its forcing clause exhaustively enumerated for $p=1,2,3$ over all $2^p$
    queries; its cone clause follows from the single-copy check of Lemma~\ref{lem:A2}
  & first pass only \\
\addlinespace
Theorem~\ref{thm:A} and Cor.~\ref{cor:composite}
  & proved here & the pair-invariance contradiction form, and the two-gadget corollary in that form, are
    checked in Lean & first pass only; full-text pass over Section~\ref{sec:onewayca} pending \\
\addlinespace
Theorem~\ref{thm:Bprime}, Cor.~\ref{cor:separation}
  & proved here & not formalized: the transcript counting is not in the Lean development
  & first pass only \\
\addlinespace
Prop.~\ref{prop:thm5}, Prop.~\ref{prop:U1}, Prop.~\ref{prop:U2}
  & proved here & not formalized & standard; none claimed \\
\addlinespace
Theorem~\ref{thm:single}, Cor.~\ref{cor:single-trivial}
  & proved here & not formalized; the fork of Theorem~\ref{thm:single} is exercised by the enumeration of
    Cor.~\ref{cor:t1-vacuous} & standard; none claimed \\
\addlinespace
Cor.~\ref{cor:t1-vacuous}, Remark~\ref{rem:zplacement}
  & proved here & exhaustively enumerated & --- \\
\addlinespace
Feasibility of the basic families of Section~\ref{sec:baseline}, both semantics
  & established by the accompanying program, not by a paper proof
  & exhaustively enumerated: every valid joint schedule generated and counted
  & --- \\
\addlinespace
Feasibility of $S_{\mathrm{late}}(r)$ under both semantics, and of $G_{\mathrm{late}}(r)$ under \textsc{dwell}
  & as above & by an explicit witness schedule verified against (M1)--(M5); the
    full schedule sets are not generated & --- \\
\addlinespace
Strict infeasibility of $G_{\mathrm{late}}(r)$
  & paper argument: a forced collision at $T_B-1$ & confirmed by enumeration for
    $r=1,2,3$ only; the argument, not the enumeration, covers general $r$ & --- \\
\addlinespace
The two coverage tables of Section~\ref{sec:baseline}
  & as above & policy simulation over the stated finite grid---three push rules
    against two chosen tie orders---for $r=1$ to $8$ & --- \\
\addlinespace
Conjecture~\ref{conj:coverage}
  & \emph{conjecture} & policy-simulated over that grid for $r=1$ to $8$ only
  & --- \\
\bottomrule
\end{tabular}
\end{center}

We use \emph{machine-checked} only for the Lean development and \emph{exhaustively enumerated} only for the
Python programs, and neither phrase for a paper proof.

\paragraph{What the Lean development checks.}
\label{sec:leanscope}
\emph{The finite $L=1$ gadget lemmas and the pair-invariance contradiction forms have been checked in Lean~4;
the operational cone lemma remains a paper lemma.} Concretely: the twelve-cell type with its edge, rank and
slot maps; that the two conflicts are crossed and lie in one rank; that the reversed edge relation is the
reversal (this one holds by definition, and the file records it as such); uniqueness of the valid joint output and the forcing of the target bit \emph{for the finite Boolean
validity encoding used in that file}; the fixpoint computation of both cones and the equality of the labelled
cone views across the two queries; the four contradiction forms; and the two composite-family corollaries.
\verb|#print axioms| reports \verb|propext| only, with no \verb|sorryAx| and no user-declared axiom.

Three things lie outside it, and we state them because the wording is easy to inflate. First, validity is
written in the Lean file as hand-written gadget-specific Boolean predicates; that these agree with the generic
path-support and slot-exclusion predicates induced by the edge and slot relations is \emph{not} proved inside
Lean---it is a paper argument, cross-checked by the Python enumeration, which builds the predicates generically
from the edge and slot relations. Second, the reversed side sets its validity predicate equal to the forward
one rather than re-deriving it from the reversed edge relation; the duality is Lemma~\ref{lem:reversal}, a
paper lemma. Third, pair-invariance is supplied as an explicit hypothesis, and the transcript
counting---hence Theorem~\ref{thm:Bprime}---is not formalized at all. Accordingly we do not write, and ask that
others not write, that Theorem~\ref{thm:A} is ``Lean verified'' without that qualification.

\subsection{Future work}
\label{sec:future}

Experiments are not reported here and no claim above depends on any. The open items are, in order of how much
they would change the paper: the full-text pass over the one-way cellular automata literature of
Section~\ref{sec:onewayca}, which could reclassify our theorem as a specialization, a corollary, or a genuinely
different output-task separation; formalizing the semantic bridge and then Lemma~\ref{lem:cone}, which would let the
mechanization claim be stated without its present caveats; a randomized extension, needing a distributional
transcript count; query-dependent termination, which changes the exact constants because the halting time
itself becomes a channel; edge conflicts, capacities and kinematics; and upper bounds for bidirectional
machines on general topologies, sharper than the gathering bound of Proposition~\ref{prop:nonvacuous}. We also
note that the families of Sections~\ref{sec:mainA} and~\ref{sec:baseline} are small, parameterized and come
with checked ground truth---enumerated for the basic families, a verified witness for the parameterized
ones---which makes them usable as a benchmark suite---a use we have not exercised and
about which we claim nothing.

\section{Conclusion}
\label{sec:conclusion}

In exact-arrival routing an early commitment can be wrong for a reason that lies both later in time and in
another agent. We asked what that costs a cellular machine---one automaton per cell, factored state, local
channels, synchronous rounds---and answered under two concessions that remove the easy objections. Handing the
machine every agent's exact reachability cache makes the single-agent direction theorem vacuous, and handing it
$k$ one-bit global aggregates makes an unconditional impossibility false. Between those two concessions the
result survives: on a twelve-cell, two-agent, depth-three topology, no forward-only machine with no global
channel is correct on two feasible queries, at any number of rounds; the time-reversed topology does the same
to backward-only machines; and on the disjoint union of the two, one-directional causal dependence is excluded
outright. The class is not vacuous---a flooding machine solves every instance in as many rounds as the diameter of the
local fabric---so what the one-directional machines lack is an orientation, not expressive power. With $k$ flags the impossibility becomes
the price $J_R+kR\ge p$ at constant depth, tight in order, against a constant number of rounds for a machine
that may use the other orientation. The mechanism throughout is a single move: put the forcing runtime bit in
the collision partner rather than in the target's own agent, so that the target's own cache is constant while
the correct answer is not.


\end{document}